\renewcommand{\algocf@captiontext}[2]{#1\algocf@typo. \AlCapFnt{}#2} 
\def\@algocf@capt@plain{top}
\renewcommand{\algocf@makecaption}[2]{%
  \addtolength{\hsize}{\algomargin}%
  \sbox\@tempboxa{\algocf@captiontext{#1}{#2}}%
  \ifdim\wd\@tempboxa >\hsize
    \hskip .5\algomargin%
    \parbox[t]{\hsize}{\algocf@captiontext{#1}{#2}}
  \else%
    \global\@minipagefalse%
    \hbox to\hsize{\box\@tempboxa}
  \fi%
  \addtolength{\hsize}{-\algomargin}%
}
\setlist[enumerate]{itemsep=2pt, topsep=4pt}
\newcommand{\sign}{\textrm{sign}}
\newcommand{\hess}{\bm{\mathcal{H}}}
\newcommand{\hessi}{\bm{\mathcal{I}}}
\newcommand{\half}{\scalebox{0.7}{1/2}}
\newcommand{\minus}{\scalebox{1}{-}}
\newcommand{\pot}{U}
\newcommand{\R}{\bm{R}}
\newcommand{\bPsi}{\bm{\Psi}}
\newcommand{\I}{\bm{I}}
\newcommand{\btheta}{\bm{\theta}}
\newcommand{\p}{\bm{p}}
\newcommand{\e}{\bm{e}}
\newcommand{\bnu}{\bm{\nu}}
\newcommand{\by}{\bm{y}}
\newcommand{\bv}{\bm{v}}
\newcommand{\bw}{\bm{w}}
\newcommand{\tilN}{\widetilde{N}}
\newcommand{\tileps}{\tilde{\epsilon}}
\newcommand{\normal}{\mathcal{N}}
\newcommand{\mass}{\bm{M}}
\newcommand{\diff}{\textrm{d}}
\newcommand{\given}{\, | \,}
\newcommand{\pushright}[1]{\ifmeasuring@#1\else\omit\hfill$\displaystyle#1$\fi\ignorespaces}
\newcommand{\nobs}{y}
\newcommand{\nunknown}{N}
\newcommand{\hmc}{Hamiltonian Monte Carlo}
\newcommand{\dhmc}{discontinuous Hamiltonian Monte Carlo}
\newcommand{\mcmc}{Markov chain Monte Carlo}
\newcommand{\ess}{effective sample size}
\newcommand{\nuts}{no-U-turn sampler}
\newcommand{\nutsGibbs}{No-U-turn / Gibbs}
\newcommand{\nobold}{0}
\renewcommand{\bm}[1]{#1}
\begin{document}

\jname{}
\jyear{}
\jvol{}
\jnum{}
\accessdate{}


\markboth{Nishimura et~al.}{Discontinuous \hmc{}}

\title{Discontinuous Hamiltonian Monte Carlo \\ for discrete parameters and discontinuous likelihoods}

\author{Akihiko Nishimura}
\affil{Department of Biomathematics, University of California - Los Angeles, \\695 Charles E Young Dr S, Los Angeles, California 90095, U.S.A. \email{akihiko4@ucla.edu}}

\author{David B.\ Dunson}
\affil{Department of Statistical Science, Duke University, \\Box 90251, Durham, North Carolina 27708, U.S.A. \email{dunson@duke.edu}}

\author{Jianfeng Lu}
\affil{Department of Mathematics, Duke University, \\Box 90320, Durham, North Carolina 27708, U.S.A. \email{jianfeng@math.duke.edu}}

\maketitle

\begin{abstract}
Hamiltonian Monte Carlo has emerged as a standard tool for posterior computation. In this article, we present an extension that can efficiently explore target distributions with discontinuous densities. Our extension in particular enables efficient sampling from ordinal parameters though embedding of probability mass functions into continuous spaces. We motivate our approach through a theory of discontinuous Hamiltonian dynamics and develop a corresponding numerical solver. The proposed solver is the first of its kind, with a remarkable ability to \textit{exactly} preserve the Hamiltonian. 
We apply our algorithm to challenging posterior inference problems to demonstrate its wide applicability and competitive performance. 
\end{abstract}

\begin{keywords}
Bayesian inference, geometric numerical integration, Markov chain Monte Carlo, measure-valued differential equation
\end{keywords}

\section{Introduction}
\label{sec:intro}

Markov chain Monte Carlo is routinely used to generate samples from posterior distributions. While specialized algorithms exist for restricted model classes, general-purpose algorithms are often inefficient and scale poorly in the number of parameters. Originally proposed by  \citet{duane87} and popularized in the statistics community through the works of \citet{neal96, neal10}, Hamiltonian Monte Carlo promises a better scalability \citep{neal10, beskos13} and has enjoyed wide-ranging successes as one of the most reliable approaches in general settings \citep{bda13, kruschke14, monnahan16}.  

However, a fundamental limitation of \hmc{} is the lack of support for discrete parameters \citep{gelman15, monnahan16}. 
The difficulty stems from the fact that the construction of \hmc{} proposals relies on a numerical solution of a differential equation. The use of a surrogate differentiable target distribution may be possible in special cases \citep{zhang12}, but approximating a discrete parameter of a likelihood by a continuous one is difficult in general \citep{berger12}.

This article presents \textit{discontinuous \hmc{}}, an extension that can efficiently explore spaces involving ordinal discrete parameters as well as continuous ones. 
The algorithm can also handle discontinuities in piecewise smooth posterior densities, which for example arise from models with structural change points \citep{chib1998changepoint,wagner2002segmented}, latent thresholds \citep{neelon2004isotonic, nakajima2013threshold}, and pseudo-likelihoods \citep{bissiri16}. 

\expandafter\MakeUppercase \dhmc{} retains the generality that makes \hmc{} suitable for automatic posterior inference. For any given target distribution, each iteration only requires evaluations of the density and of the following quantities up to normalizing constants: 1) full conditional densities of discrete parameters, and 2) either the gradient of the log density with respect to continuous parameters or their individual full conditional densities. Evaluations of full conditionals can be done algorithmically and efficiently through directed acyclic graph frameworks, taking advantage of conditional independence structures \citep{lunn2009bugs}. Algorithmic evaluation of the gradient is also efficient \citep{griewank2008algodiff} and its implementations are widely available as open-source modules \citep{carpenter2015stan}. 

In our framework, the discrete parameters are first embedded into a continuous space, inducing parameters with piecewise constant densities. A key theoretical insight is that Hamiltonian dynamics with a discontinuous potential energy can be integrated analytically near its discontinuity in a way that exactly preserves the total energy. This fact was realized by \citet{pakman13} and used to sample from binary distributions through embedding them into a continuous space. This framework was extended by \citet{afshar15} to handle more general discontinuous distributions and then by \citet{dinh2017probabilistic-hmc} to settings where the parameter space involves phylogenetic trees. All these frameworks, however, run into serious computational issues when dealing with more complex discontinuities and thus fail as general-purpose algorithms.

We introduce novel techniques to obtain a practical sampling algorithm for discrete parameters and, more generally, target distributions with discontinuous densities. In dealing with discontinuous targets, we propose a Laplace distribution for the momentum variable as a more effective alternative to the usual Gaussian distribution. We develop an efficient integrator of the resulting Hamiltonian dynamics by splitting the differential operator into its coordinate-wise  components.
A version of \dhmc{} coincides with a generalization of Metropolis-within-Gibbs samplers, overcoming dependency among the parameters by adding momentum along each coordinate.

\section{\hmc{} for discrete and discontinuous distributions}
\label{sec:event_driven_hmc}

\subsection{Review of \hmc{}}
Given a parameter $\btheta \sim \pi_{\Theta}(\cdot)$ of interest, \hmc{} introduces an auxiliary \textit{momentum} variable $\p$ and samples from a joint distribution $\pi(\btheta, \p) = \pi_{\Theta}(\btheta) \pi_P(\p)$ for some symmetric distribution $\pi_P(\p) \propto \exp\{- K(\p)\}$. 
The function $K(\p)$ is referred to as the \textit{kinetic energy} and $\pot(\btheta) = - \log \pi_{\Theta}(\btheta)$ as the \textit{potential energy}. The total energy $H(\btheta, \p) = U(\btheta) + K(\p)$ is often called the \textit{Hamiltonian}. A proposal is generated by simulating trajectories of \textit{Hamiltonian dynamics} where the evolution of the state $(\btheta, \p)$ is governed by \textit{Hamilton's equations}:
\begin{equation} 
\label{eq:hamilton}
\begin{aligned}
\frac{\diff \btheta}{\diff t}
&= \nabla_{\p} K(\p), \quad 
\frac{\diff \p}{\diff t}
= - \nabla_{\btheta} U(\btheta) = \nabla_{\btheta} \log \pi_{\Theta}(\btheta).
\end{aligned}
\end{equation}
The next section shows how we can turn the problem of dealing with a discrete parameter $\btheta$ to that of dealing with a discontinuous target density. We then proceed to make sense of the differential equation \eqref{eq:hamilton} when $\pi_{\Theta}(\btheta)$, and hence $U(\btheta)$, is discontinuous.

\subsection{Dealing with discrete parameters via embedding}
\label{sec:embedding}
Let $N$ denote a discrete parameter with the prior distribution $\pi_N(\cdot)$ and assume without loss of generality that $N$ takes positive integer values. 
For example, the inference goal may be estimation of the population size $N$ given the observation $y \given q, N \sim \mathrm{Binomial}(q, N)$. 
We embed $N$ into a continuous space by introducing a latent parameter $\tilN$ whose relationship to $N$ is defined as
\begin{equation}
N = n \quad \text{ if and only if } \quad \tilN \in (a_n, a_{n+1}],
\end{equation}
for an increasing sequence of real numbers $0 = a_1 \leq a_2 \leq a_3 \leq \ldots$. To match the prior distribution on $N$, we define the corresponding prior density on $\tilN$ as
\begin{equation}
\pi_{\tilN}(\tilde{n})
= \sum_{n \geq 1} \frac{\pi_N(n)}{a_{n+1} - a_n} \, \mathbbm{1}\{ a_n < \tilde{n} \leq a_{n+1} \},
\end{equation}
where the Jacobian-like factor $(a_{n+1} - a_n)^{-1}$ adjusts for embedding into non-uniform intervals. 

Although the choice $a_n = n$ for all $n$ is a natural one, a non-uniform embedding is useful in effectively carrying out a parameter transformation of $N$. For example, a log-transform $a_n = \log n$ may be used to avoid a heavy-tailed distribution on $\tilN$ or to reduce correlation between $\tilN$ and the rest of the parameters. Mixing of many \mcmc{} algorithms, including \dhmc{}, can be substantially improved by such parameter transformations \citep{roberts2009adap-mcmc, thawornwattana2018designing-mcmc-proposal}.

While the above strategy can be applied whether or not the discrete parameter values have a natural ordering, embedding the values in an arbitrary order likely induces a multi-modal continuous distribution. 
The mixing rate of (discontinuous) \hmc{} generally suffers from multi-modality due to the energy-conservation property of the dynamics \citep{neal10}.

\subsection{How \hmc{} fails on discontinuous target densities}
\label{sec:hmc_fails_on_discontinuity}
Having recast the discrete parameter problem as a discontinuous one, we focus the rest of our discussion on discontinuous targets. 
An \textit{integrator} is an algorithm that numerically approximates an evolution of the exact solution to a differential equation. \hmc{} requires \textit{reversible} and \textit{volume-preserving} integrators to guarantee symmetry of its proposal distributions (see Section~\ref{sec:dhmc_reversibility} and \citealp{neal10}). 
These proposals are generated as follows: 
\begin{enumerate}
	\item Sample the momentum from its marginal distribution $\p \sim \pi_P(\cdot)$.
	\item Using a reversible and volume-preserving integrator, approximate $\{\btheta(t), \p(t)\}_{t \geq 0}$, the solution of the differential equation \eqref{eq:hamilton} with the initial condition $\{\btheta(0), \p(0)\} = (\btheta, \p)$. Use the approximate solution  $(\btheta^*, \p^*) \approx \{\btheta(\tau), \p(\tau)\}$ for some $\tau > 0$ as a proposal. 
\end{enumerate}
The proposal $(\btheta^*, \p^*)$ then is accepted with Metropolis probability \citep{metropolis53}
\begin{equation}
\min \left[ 1, \ \exp\{ - H(\btheta^*, \p^*) + H(\btheta, \p)\} \right],
\end{equation}
where $H(\btheta, \p) = - \log \pi(\btheta, \p)$ denotes the Hamiltonian.
With an accurate integrator, the acceptance probability of
$(\btheta^*, \p^*)$ can be close to $1$ because the exact dynamics conserves the energy:
$H\{\btheta(t), \p(t)\}= H\{\btheta(0), \p(0)\}$ for all $t \geq 0$. The integrator of choice for \hmc{} is the
\textit{leapfrog} scheme, which approximates the evolution
$\{\btheta(t), \p(t)\} \to \{\btheta(t + \epsilon), \p(t + \epsilon)\}$ by the successive updates
\begin{equation}
\label{eq:leapfrog}
\begin{aligned}
\p &\gets \p - \frac{\epsilon}{2} \nabla_{\btheta} \pot(\btheta), \quad
\btheta	\gets \btheta + \epsilon \nabla_{\p} K(\p), \quad
\p \gets \p - \frac{\epsilon}{2} \nabla_{\btheta} \pot(\btheta).
\end{aligned}
\end{equation}
When $\pi_\Theta(\cdot)$ is smooth, approximating the evolution $\{\btheta(0), \p(0)\} \to \{\btheta(\tau), \p(\tau)\}$ with $L = \lfloor \tau / \epsilon \rfloor$ leapfrog steps results in a global error of order $O(\epsilon^2)$ so that $H(\btheta^*, \p^*) = H(\btheta, \p) + O(\epsilon^2)$ \citep{hairer06}. \hmc{}'s high acceptance rates and scaling properties critically depend on this second-order accuracy \citep{beskos13}. When $\pi_\Theta(\cdot)$ has a discontinuity, however, the leapfrog updates \eqref{eq:leapfrog} fail to account for the instantaneous change in $\pi_\Theta(\cdot)$, incurring an unbounded error that does not decrease even as $\epsilon \to 0$ (supplement Section~\ref{sec:leapfrog_on_discontinuous_target}).

\subsection{Theory of discontinuous Hamiltonian dynamics}
\label{sec:event_driven_approach}

Suppose $U(\theta)$ is piecewise smooth; that is, the domain has a partition $\mathbb{R}^d = \cup_k \Omega_k$ such that $U(\theta)$ is smooth on the interior of $\Omega_k$ and the boundary $\partial \Omega_k$ is a $(d - 1)$-dimensional piecewise smooth manifold.
While the classical definition of gradient $\nabla U(\btheta)$ makes no sense on a discontinuity set $\cup_k \partial \Omega_k$, it can be defined through a notion of \textit{distributional derivatives} and the corresponding Hamilton's equations \eqref{eq:hamilton} can be interpreted as a \textit{measure-valued differential inclusion} \citep{stewart00}. In this case, a solution is in general non-unique unlike that of a smooth ordinary differential equation.  To find a solution that preserves critical properties of Hamiltonian dynamics, we rely on a so-called \textit{selection principle} \citep{ambrosio08} as follows.

Define a sequence of smooth approximations $\pot_\delta(\btheta)$ of
$\pot(\btheta)$ for $\delta > 0$ through the convolution
$\pot_\delta(\theta) := \int \pot(\eta) \phi_\delta(\theta - \eta) \diff \eta$, where
$\phi_\delta(\btheta) = \delta^{-d} \phi(\delta^{-1} \btheta)$ is a
compactly supported smooth function with $\phi \geq 0$ and
$\int \phi = 1$. 
Now let $\{\btheta_{\delta}(t), \p_{\delta}(t)\}_{t \geq 0}$ be the
solution of Hamilton's equations with the potential energy
$\pot_\delta$. The pointwise limit
$\{\btheta(t), \p(t)\} = \lim_{\delta \to 0} \{\btheta_{\delta}(t),
\p_{\delta}(t)\}$
can be shown to exist for almost every initial condition on some time interval $[0, T]$  \citep{hirsch1974ode}. The collections of the trajectories $\left\{ \{\btheta(t), \p(t)\}_{t \geq 0} :  \{\btheta(0), \p(0)\} \in \mathbb{R}^d \right\}$ then define the dynamics
corresponding to $\pot(\btheta)$. This construction provides a rigorous mathematical foundation for the special cases of discontinuous Hamiltonian dynamics derived by
\citet{pakman13} and \citet{afshar15} through physical heuristics. 

The behavior of the limiting dynamics near the discontinuity is
deduced as follows. Suppose that the trajectory $\{\btheta(t), \p(t)\}$
hits the discontinuity at an event time $t_e$ and let $t_e^-$ and
$t_e^+$ denote infinitesimal moments before and after that. At a discontinuity point $\btheta \in \partial \Omega_k$, we have
\begin{equation}
\label{eq:bdry_regularity}
\lim_{\delta \to 0} \nabla_{\btheta} \pot_\delta(\btheta) / \| \nabla_{\btheta} \pot_\delta(\btheta) \| =  \bnu(\btheta),
\end{equation}
where $\bnu(\btheta)$ denotes a unit vector orthogonal to $\partial \Omega_k$, pointing in the direction of higher potential energy. 
The relations \eqref{eq:bdry_regularity} and $\diff \p_{\delta} / \diff t = - \nabla_{\btheta} \pot_\delta$ imply that the only change in $\p(t)$ upon encountering the discontinuity occurs in the direction of $\bnu_e = \bnu\{\btheta(t_e)\}$:
\vspace{-.25\baselineskip}
\begin{equation}
\label{eq:orthogonal_change_at_discontinuity}
\p(t_e^+) = \p(t_e^-)  - \gamma \, \bnu_e 
\end{equation}
for some $\gamma > 0$. There are two possible types of change in $\p$ depending on the potential energy difference $\Delta \pot_e$ at the discontinuity, which we formally define as
\begin{equation}
\label{eq:energy_diff}
\Delta \pot_e = \lim_{\epsilon \to 0^+} \pot\!\left\{\btheta(t_e^+) + \epsilon \p(t_e^-)\right\} - \pot\!\left\{\btheta(t_e^-)\right\}.
\end{equation}
When the momentum does not provide enough kinetic energy to overcome the potential energy increase $\Delta U_e$, the trajectory bounces against the plane orthogonal to $\bnu_e$. Otherwise, the trajectory moves through the discontinuity by transferring kinetic energy to potential energy. Either way, the magnitude of an instantaneous change $\gamma$ can be determined via the energy conservation law:
\begin{equation}
\label{eq:energy_conservation_at_discontinuity}
K\{\p(t_e^+)\} - K\{\p(t_e^-)\} = \pot\{\btheta(t_e^-)\} - \pot\{\btheta(t_e^+)\}.
\end{equation}
Figure~\ref{fig:dhmc_visualization}, which is explained in more detail in Section~\ref{sec:dhmc}, provides a visual illustration of the trajectory behavior at a discontinuity.

\section{Integrator for discontinuous dynamics via Laplace momentum}
\label{sec:dhmc}

\subsection{Limitation of Gaussian momentum-based approaches}
\label{sec:issue_with_gaussian_mom}
Use of non-Gaussian momentums has received limited attention in the \hmc{} literature. Correspondingly, the existing discontinuous extensions all rely on Gaussian momentums \citep{pakman13, afshar15, dinh2017probabilistic-hmc}. 

In developing a general-purpose algorithm for sampling from discontinuous targets, however, dynamics based on a Gaussian momentum have a serious shortcoming.  In order to approximate the dynamics accurately, the integrator must detect every single discontinuity encountered by a trajectory and then compute the potential energy difference each time (Algorithm~\ref{alg:ed_integrator_with_gaussian_mom} in the supplement Section~\ref{sec:ed_integrator_for_gaussian_mom}). To see why this is a serious problem, consider a discrete parameter $N \in \mathbb{Z}^+$ with a substantial posterior uncertainty, say $\textrm{Var}(N \given \text{data})^{1/2} \approx 1000$. We can then expect a Metropolis move like $N \to N \pm 1000$ to be accepted with a moderate probability, costing only a single likelihood evaluation. On the other hand, if we were to sample a continuously embedded counter part $\tilN$ of $N$ using \dhmc{} with the Gaussian momentum-based Algorithm~\ref{alg:ed_integrator_with_gaussian_mom}, a transition of the corresponding magnitude necessarily requires \emph{1000 likelihood evaluations}. The algorithm is made practically useless by such a high computational cost for otherwise simple parameter updates.

\subsection{Hamiltonian dynamics based on Laplace momentum}
\label{sec:dynamics_with_laplace_mom}
The above example exposes a major challenge in turning discontinuous Hamiltonian dynamics into a practical general-purpose sampling algorithm: an integrator must rely only on a small number of target density evaluation to jump through multiple discontinuities while approximately preserving the total energy. 
We employ a Laplace momentum $\pi_P(\p) \propto \prod_i \exp(- m_i^{-1} | p_i |)$ to provide a solution. While similar distributions have been considered for improving numerical stability of traditional \hmc{} \citep{zhang16, lu16, livingstone2019kinetic-energy-choice}, we exploit a unique feature of the Laplace momentum in a fundamentally new way.

Hamilton's equation under the independent Laplace momentum is given by
\begin{equation}
\label{eq:hamilton_for_laplace}
\frac{\diff \btheta}{\diff t}
= \bm{m}^{-1} \odot \textrm{sign}(\p), \quad 
\frac{\diff \p}{\diff t}
= - \nabla_{\btheta} \pot(\btheta),
\end{equation}
where $\odot$ denotes element-wise multiplication. A key characteristic of the dynamics \eqref{eq:hamilton_for_laplace} is that ${\diff \btheta}/{\diff t}$ depends only on the signs of $p_i$'s and not on their magnitudes. In particular, if we know that $p_i(t)$'s do not change their signs on the time interval $t \in [\tau, \tau + \epsilon]$, then we also know that
\begin{equation}
\btheta(\tau + \epsilon) = \btheta(\tau) + \epsilon \, \bm{m}^{-1} \odot \textrm{sign}\{\p(\tau) \}
\end{equation}
\emph{irrespective of the intermediate values} $\pot\{\btheta(t)\}$ along the trajectory $\{\btheta(t), \p(t)\}$ for $t \in$ \mbox{$[\tau, \tau + \epsilon]$}.
Our integrator critically takes advantage of this property so that it can jump through multiple discontinuities of $\pot(\btheta)$ in just a single target density evaluation.

\subsection{Integrator for Laplace momentum via operator splitting}
\label{sec:integrator_for_laplace_mom}

Operator splitting is a technique to approximate the solution of a differential equation by decomposing it into components, each of which can be solved more easily \citep{mclachlan02}. More commonly used Hamiltonian splitting methods are special cases \citep{neal10}. A convenient splitting scheme for \eqref{eq:hamilton_for_laplace} can be devised by considering the equation for each coordinate $(\theta_i, p_i)$ while keeping the other parameters $(\btheta_{\minus i}, \p_{\minus i})$ fixed:
\begin{equation}
\label{eq:coord_hamilton_for_laplace}
\begin{aligned}
&\frac{\diff \theta_i}{\diff t}
= m_i^{-1} \, \textrm{sign}(p_i), \quad
\frac{\diff p_i}{\diff t} = - \partial_{\theta_i} U(\btheta), \quad
\frac{\diff \btheta_{\minus i}}{\diff t} = \frac{\diff \p_{\minus i}}{\diff t} = \bm{0}.
\end{aligned}
\end{equation}
There are two possible behaviors for the solution $\{\btheta(t), \p(t)\}$ of \eqref{eq:coord_hamilton_for_laplace} for $t \in [\tau, \tau + \epsilon]$, depending on the initial momentum $p_i(\tau)$. Let $\btheta^*(t)$ denote a potential path of $\btheta(t)$: 
\begin{equation}
\theta^*_i(t) = \theta_i(\tau) + (t - \tau) m_i^{-1} \textrm{sign}(p_i(\tau)), \quad
\btheta^*_{\minus i}(t) = \btheta_{\minus i}(\tau).
\end{equation}
In case the initial momentum is large enough that $m_i^{-1} \vert p_i(\tau) \vert > \pot\big\{\btheta^*(t) \big\} - \pot\{{\btheta}(\tau)\}$ for all $t \in [\tau, \tau + \epsilon]$, we have
\begin{equation}
\btheta(\tau + \epsilon) 
= \btheta^*(\tau + \epsilon) 
= \btheta(\tau) + \epsilon \, m_i^{-1} \textrm{sign}\{p_i(\tau)\} \e_i.
\end{equation}
Otherwise, the momentum $p_i$ flips ($p_i \gets - p_i$) and the trajectory $\btheta(t)$ reverses its course at the event time $t_e$ given by
\begin{equation}
t_e = \inf \Big\{t \in [\tau, \tau + \epsilon] : \pot\{\btheta^*(t) \} - \pot\{{\btheta}(\tau) \} > K\{\p(\tau)\} \Big\}.
\end{equation}
See Figure~\ref{fig:dhmc_visualization} for a visual illustration of the trajectory $\btheta(t)$. 

\begin{figure}
	\hspace*{-.1\linewidth}
	\begin{minipage}{.48\linewidth}
	\vspace{-\baselineskip}
	\includegraphics[scale=.7]{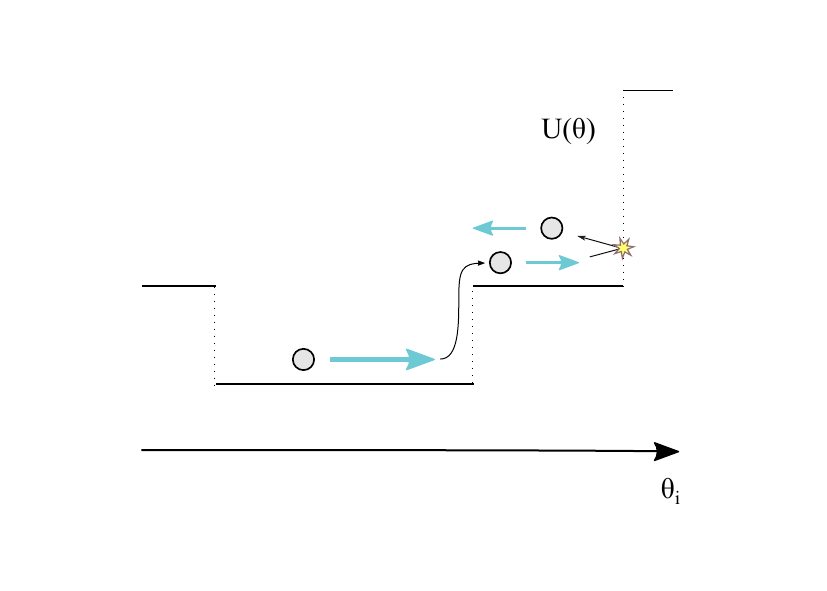}
	\end{minipage}
	\hspace*{-.16\linewidth}
	\begin{minipage}{.48\linewidth}
	\vspace{-2\baselineskip}
	\caption{An example trajectory $\btheta(t)$ of discontinuous Hamiltonian dynamics. The trajectory has enough kinetic energy to move across the first discontinuity by transferring some kinetic energy to potential energy. Across the second discontinuity, however, the trajectory has insufficient kinetic energy to compensate for the potential energy increase and bounces back as a result. }
	\label{fig:dhmc_visualization}
	\end{minipage}
	\vspace{-2.5\baselineskip}
\end{figure}

By emulating the behavior of the solution $\{\btheta(t), \p(t)\}$, we obtain an efficient integrator of the coordinate-wise equation \eqref{eq:coord_hamilton_for_laplace} as given in Algorithm~\ref{alg:coord_integrator_with_laplace_mom}. While the parameter $\btheta$ does not get updated when $m_i^{-1} | p_i | < \Delta \pot$ (line~\ref{line:compare_energy}), the momentum flip $p_i \gets - p_i$ (line~\ref{line:mom_flip}) ensures that the next numerical integration step leads the trajectory toward a higher density of $\pi_\Theta(\btheta)$.
This can be viewed as a generalization of the guided random walk idea by \cite{gustafson1998guided-metropolis}.

The solution of the original (unsplit) differential equation \eqref{eq:hamilton_for_laplace} is approximated by sequentially updating each coordinate of $(\btheta, \p)$ with Algorithm~\ref{alg:coord_integrator_with_laplace_mom} as illustrated in Figure~\ref{fig:2d_traj_of_coord_integrator}. The reversibility of the resulting proposal is guaranteed by randomly permuting the order of the coordinate updates. Alternatively, one can split the operator symmetrically to obtain a reversible integrator \citep{mclachlan02}.
The integrator does not reproduce the exact solution but nonetheless preserves the Hamiltonian exactly. This remains true with any stepsize $\epsilon$, but for good mixing the stepsize needs to be chosen small enough that the condition on Line~\ref{line:compare_energy} is satisfied with high probability (supplement Section~\ref{sec:tuning_dhmc}).

\subsection{Mixing momentum distributions for continuous and discrete parameters}
\label{sec:mixing_momentum_dist}
The potential energy $\pot(\btheta)$ is a smooth function of $\theta_i$ if both the prior and likelihood depend smoothly on $\theta_i$.
The coordinate-wise update of Algorithm~\ref{alg:coord_integrator_with_laplace_mom} leads to a valid proposal mechanism whether or not $\pot(\btheta)$ has discontinuities along $\theta_i$. If $\pot(\btheta)$ varies smoothly along some coordinates of $\btheta$, however, we can devise an integrator that takes advantage of the smooth dependence.

We first write $\btheta = (\btheta_I, \btheta_J)$ where the collections of indices $I$ and $J$ are defined as
\begin{equation}
\label{eq:disc_indiices}
\begin{aligned}
I &= \left\{ i \in \{1, \ldots, d\} : \text{$\pot(\btheta)$ is a smooth function of $\theta_i$} \right\}, \quad J = \{1, \ldots, d\} \setminus I.
\end{aligned}
\end{equation}

\begin{minipage}{0.45\linewidth}
\hspace*{-.05\linewidth}
{\LinesNumbered
	\begin{algorithm}[H]
		\caption{Coordinate-wise integrator for dynamics with Laplace momentum}
		\label{alg:coord_integrator_with_laplace_mom}
		\DontPrintSemicolon
		\SetKwFunction{CoordIntegrator}{CoordIntegrator}
		\SetKwProg{Fn}{Function}{:}{}{}
		\Fn{\mbox{\CoordIntegrator $(\btheta, \p, i, \epsilon)$}}{
			$\btheta^* \gets \btheta$ \;
			$\theta^*_i \gets \theta^*_i + \epsilon m_i^{-1} \textrm{sign}(p_i)$ \;
			$\Delta \pot \gets \pot(\btheta^*) - \pot(\btheta)$ \;
			\eIf{$m_i^{-1} |p_i| > \Delta \pot$ \label{line:compare_energy}}{
				$\theta_i \gets \theta_i^*$ \;
				$p_i \gets p_i - \textrm{sign}(p_i) m_i \Delta \pot$ \;
			}{
				$p_i \gets - p_i$ \; \label{line:mom_flip}
			}
			\Return{$\btheta, \, \p$}
		}
	\end{algorithm}
}
\end{minipage}
\begin{minipage}{0.55\linewidth}
\begin{algorithm}[H]
	\setlength{\lineskip}{4pt}
	\SetInd{0.5em}{0.75em}
	\caption{Integrator for discontinuous \newline \hmc{}}
	\label{alg:dhmc_integrator}
	\DontPrintSemicolon
	\SetKwFunction{DiscIntegrator}{DiscIntegrator}
	\SetKwFunction{Permute}{Permute}
	\SetKwProg{Fn}{Function}{:}{}{}
	\vspace{.3\baselineskip}
	\Fn{\mbox{\DiscIntegrator $\left(\btheta, \p, \epsilon, \varphi = \Permute(J)\right)$}}{
		$\p_I \gets \p_I + \dfrac{\epsilon}{2} \nabla_{\btheta_I} \log \pi(\btheta)$ \;
		$\btheta_I \gets \btheta_I + \dfrac{\epsilon}{2} \, \mass_I^{-1} \p_I	$ \;
		\For{$j \ \mathrm{in} \ J$}{
			\mbox{$\btheta, \, \p \gets \CoordIntegrator(\btheta, \p, \varphi(j), \epsilon)$}
			\tcp{\footnotesize \hspace{-.5ex}Update discontinuous params}
		}
		$\btheta_I \gets \btheta_I + \dfrac{\epsilon}{2} \, \mass_I^{-1} \p_I	$ \;
		$\p_I \gets \p_I + \dfrac{\epsilon}{2} \nabla_{\btheta_I} \log \pi(\btheta)$ \;
		\Return{$\btheta, \, \p$}
	}
\end{algorithm}
\end{minipage}

\begin{figure}
	\begin{minipage}{.5\linewidth}
	\includegraphics[width=\linewidth]{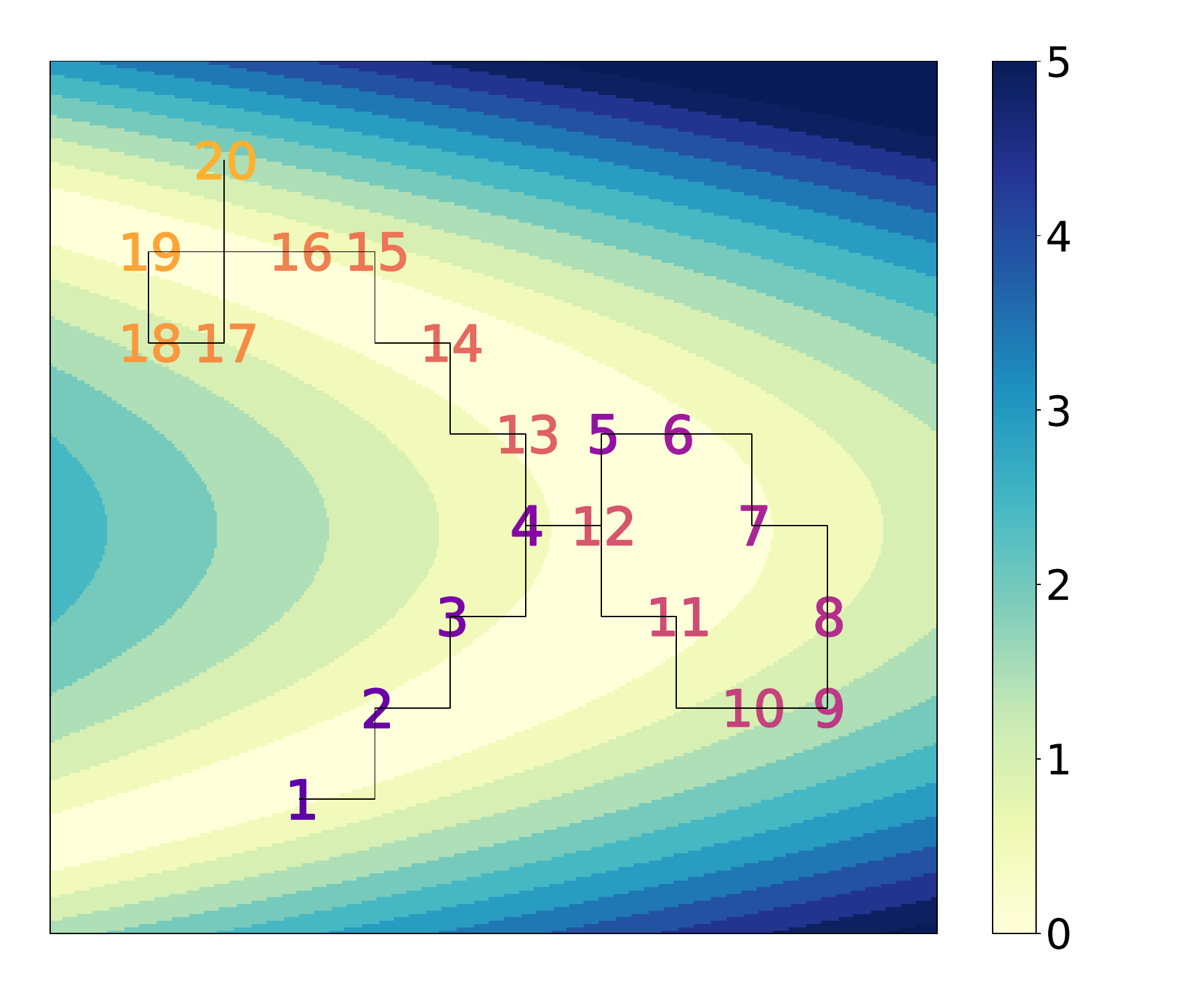}
	\end{minipage}
	\hspace{-.275\linewidth}
	\begin{minipage}{.45\linewidth}
	\caption{A trajectory of Laplace momentum based Hamiltonian dynamics $\{\theta_1(t), \theta_2(t)\}$ approximated by the coordinate-wise integrator of Algorithm~\ref{alg:coord_integrator_with_laplace_mom}. The log density of the target distribution changes in the increment of $0.5$ and has ``banana-shaped'' contours. Each numerical integration step consists of the coordinate-wise update along the horizontal axis followed by that along the vertical axis. The order of the coordinate updates is randomized at the beginning of numerical integration to ensure reversibility. The trajectory initially travels in the direction of the initial momentum, a process marked by the numbers \mbox{1 -- 5}. At the 5th numerical integration step, however, the trajectory does not have sufficient kinetic energy to take a step upward and hence flips the momentum downward. Such momentum flips also occur at the 8th and 9th numerical integration steps, again changing the direction of the trajectory. The rest of the trajectory follows the same pattern.}
	\label{fig:2d_traj_of_coord_integrator}
	\end{minipage}
	\vspace{-.5\baselineskip}
\end{figure}

\noindent More precisely, we assume that the parameter space has a partition $\mathbb{R}^{|I|} \times \mathbb{R}^{|J|} = \cup_k \mathbb{R}^{|I|} \times \Omega_k$ such that $U(\btheta)$ is smooth on $\mathbb{R}^{|I|} \times \Omega_k$ for each $k$. We write $\p = (\p_I, \p_J)$ correspondingly and define the distribution of $\p$ as a product of Gaussian and Laplace so that
\begin{equation}
\label{eq:mixed_kinetic_energy}
K(\p) 
	= - \log \pi_P(\p)
	= \frac{1}{2} \, \p_I^\intercal \mass_I^{-1} \p_I + \textstyle{\sum}_{j \in J} m_j^{-1} |p_j |,
\end{equation}
where $\mass_I$ and $\mass_J = \textrm{diag}(\bm{m}_J)$ are \textit{mass matrices} \citep{neal10}. With slight abuse of terminology, we will refer to $(\btheta_J, \p_J)$ as discontinuous parameters.

When mixing Gaussian and Laplace momenta, we approximate the dynamics via an integrator based again on operator splitting; we update the smooth parameter $(\btheta_I, \p_I)$ first, then the discontinuous parameter $(\btheta_J, \p_J)$, followed by another update of $(\btheta_I, \p_I)$. The discontinuous parameters are updated coordinate-wise as described in Section~\ref{sec:integrator_for_laplace_mom}. The update of $(\btheta_I, \p_I)$ is based on a decomposition familiar from the leapfrog scheme:
\begin{alignat}{3}
\frac{\diff \p_I}{\diff t} 
&= \nabla_{\btheta_I} \log \pi(\btheta), \quad &
\frac{\diff \btheta_I}{\diff t} 
&= \bm{0}, \quad &
\frac{\diff \btheta_J}{\diff t} &= \frac{\diff \p_J}{\diff t} = \bm{0}, 
\label{eq:smooth_momentum_update}\\
\frac{\diff \btheta_I}{\diff t} 
&= \mass_I^{-1} \p_I,  \ &
\frac{\diff \p_I}{\diff t} 
&= \bm{0}, \ &
\frac{\diff \btheta_J}{\diff t} &= \frac{\diff \p_J}{\diff t} = \bm{0}. \label{eq:smooth_position_update}
\end{alignat}
Algorithm~\ref{alg:dhmc_integrator} describes the integrator with all the ingredients put together.  When mixing in Gaussian momentum, the integrator continues to preserve the Hamiltonian accurately if not exactly, with the global error rate of $O(\epsilon^2)$ (supplement Section~\ref{sec:dhmc_integrator_error_analysis}). Advantages of separately treating continuous and discontinuous parameters in this manner are discussed in the supplement Section~\ref{sec:joint_vs_coordinate_update}.

\section{Theoretical properties of discontinuous \hmc{}}
\label{sec:theory_of_dhmc}

\subsection{Reversibility of \dhmc{} transition kernel}
\label{sec:dhmc_reversibility}

As for existing \hmc{} variants, the reversibility of \dhmc{} is a direct consequence of the reversibility and volume-preserving property of our integrator in Algorithm~\ref{alg:dhmc_integrator} \citep{neal10, fang14}. We thus focus on establishing these properties of our integrator.
Let $\bPsi$ denote a bijective map on the space $(\btheta, \p)$ corresponding to the approximation of discontinuous Hamiltonian dynamics through multiple numerical integration steps. An integrator is \textit{reversible} if $\bPsi$ satisfies
\begin{equation}
\label{eq:reversibility}
(\R \circ \bPsi)^{-1} = \R \circ \bPsi \quad
\text{ or equivalently } \quad \bPsi^{-1} = \R \circ \bPsi \circ \R,
\end{equation}
where $\R: (\btheta, \p) \to (\btheta, - \p)$ is the momentum flip operator. Due to the updates of discrete parameters in a random order, the map $\bPsi$ induced by our integrator is non-deterministic. Consequently, our integrator has an unconventional feature of being reversible ``in distribution'' only, $(\R \circ \bPsi)^{-1} \overset{d}{=} \R \circ \bPsi$, which is sufficient for the resulting Markov chain to be reversible.

\begin{lemma}
	\label{lem:symplecticity_coord_integrator}
	For a piecewise smooth potential energy $\pot(\btheta)$, the coordinate-wise integrator of Algorithm~\ref{alg:coord_integrator_with_laplace_mom} is volume-preserving and reversible for any coordinate index $i$ except on a set of Lebesgue measure zero. 
	Moreover, updating multiple coordinates with the integrator in a random order $\varphi(1), \ldots, \varphi(d)$ is again reversible in distribution provided that the random permutation $\varphi$ satisfies $\left\{\varphi(1), \varphi(2), \ldots, \varphi(d)\right\} \overset{d}{=} \{\varphi(d), \varphi(d - 1), \ldots, \varphi(1)\}$.
\end{lemma}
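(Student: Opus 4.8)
The map induced by Algorithm~\ref{alg:coord_integrator_with_laplace_mom} for a fixed index $i$ acts as the identity on every coordinate except $(\theta_i, p_i)$, so both claims reduce to statements about the two-dimensional map $T_i:(\theta_i, p_i) \mapsto (\theta_i', p_i')$ with the remaining coordinates frozen as parameters. The plan is to record its two branches explicitly: on the ``move'' region $\{m_i^{-1}|p_i| > \Delta\pot\}$ we have $\theta_i' = \theta_i + \epsilon\, m_i^{-1}\sign(p_i)$ and $p_i' = p_i - \sign(p_i)\, m_i\, \Delta\pot(\theta_i)$, where $\Delta\pot(\theta_i) = \pot(\btheta^*) - \pot(\btheta)$ is viewed as a function of $\theta_i$ with $\sign(p_i)$ locally constant; on the ``bounce'' region it is the reflection $\theta_i' = \theta_i$, $p_i' = -p_i$.

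\textbf{Volume preservation.} I would compute the Jacobian of $T_i$ on each branch. On the move region it is lower triangular with unit diagonal ($\partial\theta_i'/\partial\theta_i = 1$, $\partial\theta_i'/\partial p_i = 0$, $\partial p_i'/\partial p_i = 1$), hence has determinant $+1$; on the bounce region $T_i$ is a momentum reflection with determinant $-1$. Thus $|\det DT_i| = 1$ away from the separating curve $m_i^{-1}|p_i| = \Delta\pot(\theta_i)$ and the discontinuity set of $\pot$, both Lebesgue-null under the piecewise-smoothness hypothesis. Combined with almost-everywhere injectivity of $T_i$ (which follows once reversibility gives $T_i$ the explicit a.e.\ inverse below), the change-of-variables formula yields $\mathrm{vol}(T_i(A)) = \mathrm{vol}(A)$ for every measurable $A$.

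\textbf{Reversibility of a single coordinate.} Writing $R_i$ for the coordinate momentum flip $p_i \mapsto -p_i$, which coincides with $\R$ on the only coordinate that $T_i$ touches, I would verify the equivalent identity $T_i \circ R_i \circ T_i = R_i$ on each branch. The bounce branch is immediate because the intermediate $R_i$ undoes the first flip. For the move branch, one step of $T_i$ from $(\theta_i, p_i)$ lands at $(\theta_i', p_i')$ with $p_i'$ of the same sign as $p_i$; after flipping the momentum, the second $T_i$ step exactly retraces the first displacement, so its energy gap is $-\Delta\pot$, the move condition is seen to persist, and a short calculation returns $(\theta_i, -p_i) = R_i(\theta_i, p_i)$. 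From $T_i \circ R_i \circ T_i = R_i$ I obtain $T_i^{-1} = R_i T_i R_i = \R\, T_i\, \R$, which both certifies the a.e.\ inverse used above and establishes single-coordinate reversibility.

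\textbf{Composite map and main obstacle.} For the composite I would exploit the conjugation relation $T_i^{-1} = \R\, T_i\, \R$. Writing $\bPsi_\varphi = T_{\varphi(d)} \circ \cdots \circ T_{\varphi(1)}$, reversing the product term by term gives $\bPsi_\varphi^{-1} = \R\,(T_{\varphi(1)} \circ \cdots \circ T_{\varphi(d)})\,\R = \R\,\bPsi_{\tilde\varphi}\,\R$, where $\tilde\varphi = (\varphi(d), \ldots, \varphi(1))$; hence $(\R \circ \bPsi_\varphi)^{-1} = \R \circ \bPsi_{\tilde\varphi}$. The hypothesis $\varphi \overset{d}{=} \tilde\varphi$ then gives $\bPsi_{\tilde\varphi} \overset{d}{=} \bPsi_\varphi$ as random maps, so $(\R \circ \bPsi_\varphi)^{-1} \overset{d}{=} \R \circ \bPsi_\varphi$, which is reversibility in distribution; volume preservation of the composite is automatic since each factor preserves volume. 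The step I expect to be the main obstacle is the move-branch reversibility check: one must track that the reverse step retraces the same displacement, \emph{remains} in the move branch rather than crossing into the bounce branch, and returns precisely the negated momentum---all hinging on the update $p_i \gets p_i - \sign(p_i)\, m_i\, \Delta\pot$ and on $\Delta\pot$ flipping sign under reversal. Everything else (the Jacobian determinants, the null-set caveats, and the permutation-symmetry argument) is routine once this branch-wise identity is secured.
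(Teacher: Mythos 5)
Your proposal is correct and follows essentially the same route as the paper's proof: explicit branch-wise update formulas, a triangular-Jacobian computation for volume preservation, direct verification of the reversibility identity $T_i \circ R_i \circ T_i = R_i$ with the measure-zero exceptional sets noted, and the identical permutation-reversal argument $\bPsi_\varphi^{-1} = \R \circ \bPsi_{\tilde\varphi} \circ \R$ for the composite. If anything your write-up is more careful in two spots the paper glosses over: you check that the reversed step persists in the move branch (the paper only says the inverse "can be directly verified"), and you correctly observe that the bounce branch is a reflection with Jacobian determinant $-1$, so it is $\lvert \det \rvert = 1$ rather than $\det = 1$ (as the paper loosely states) that delivers volume preservation.
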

\begin{theorem}
	\label{thm:dhmc_integrator}
	For a piecewise smooth potential energy $\pot(\btheta)$, the integrator of Algorithm~\ref{alg:dhmc_integrator} is  volume-preserving and reversible except on a set of Lebesgue measure zero.
\end{theorem}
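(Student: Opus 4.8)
The plan is to exhibit the integrator of Algorithm~\ref{alg:dhmc_integrator} as a composition of elementary maps, verify the two properties factor by factor, and then invoke Lemma~\ref{lem:symplecticity_coord_integrator} for the only nontrivial factor. Concretely, I would write $\bPsi = A \circ B \circ C_\varphi \circ B \circ A$, reading right to left in order of application, where $A$ denotes the half momentum update $\p_I \gets \p_I + \tfrac{\epsilon}{2}\nabla_{\btheta_I}\log\pi(\btheta)$, $B$ the half position update $\btheta_I \gets \btheta_I + \tfrac{\epsilon}{2}\mass_I^{-1}\p_I$, and $C_\varphi$ the composition over $j \in J$ of the coordinate-wise updates of Algorithm~\ref{alg:coord_integrator_with_laplace_mom} taken in the random order $\varphi$. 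The two outer half-steps are literally the same maps $A$ and $B$ on the second pass, so the composition is palindromic about the central block $C_\varphi$.

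For volume preservation, observe that $A$ holds $\btheta$ fixed and shears $\p_I$ by a function of $\btheta$ alone (equation~\eqref{eq:smooth_momentum_update}), while $B$ holds $\p$ fixed and shears $\btheta_I$ by a function of $\p_I$ alone (equation~\eqref{eq:smooth_position_update}); each has unit Jacobian determinant and is therefore volume-preserving. The central block $C_\varphi$ is volume-preserving by Lemma~\ref{lem:symplecticity_coord_integrator}. Since a composition of volume-preserving maps is volume-preserving, $\bPsi$ is as well.

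For reversibility, I would first check that each deterministic outer factor satisfies $\R \circ A \circ \R = A^{-1}$ and $\R \circ B \circ \R = B^{-1}$; both follow by direct substitution, using that $\nabla_{\btheta_I}\log\pi(\btheta)$ is $\R$-invariant (it depends only on $\btheta$, which the momentum flip leaves unchanged, and is well-defined for $\btheta$ off the discontinuity set by the very definition of the smooth index set $I$). Conjugating the palindrome by $\R$ and inserting $\R\circ\R=\mathrm{id}$ between factors then telescopes to $\R \circ \bPsi \circ \R = A^{-1} \circ B^{-1} \circ (\R \circ C_\varphi \circ \R) \circ B^{-1} \circ A^{-1}$, whereas $\bPsi^{-1} = A^{-1} \circ B^{-1} \circ C_\varphi^{-1} \circ B^{-1} \circ A^{-1}$. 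The two agree up to replacing $\R \circ C_\varphi \circ \R$ by $C_\varphi^{-1}$, and Lemma~\ref{lem:symplecticity_coord_integrator} supplies exactly $\R \circ C_\varphi \circ \R \overset{d}{=} C_\varphi^{-1}$ (the distributional reversibility of the randomly ordered block, valid because $\varphi$ carries the required palindromic symmetry in distribution). As $A$ and $B$ are deterministic, conjugating equal-in-distribution random maps by the fixed map $B^{-1}\circ A^{-1}$ preserves the equality, yielding $\R \circ \bPsi \circ \R \overset{d}{=} \bPsi^{-1}$, which is the reversibility relation~\eqref{eq:reversibility} in distribution.

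The main obstacle, and the only place where genuine care is needed, is the ``almost everywhere'' qualifier together with the distributional nature of the reversibility. Every exact-energy-preservation and invertibility statement for $C_\varphi$ holds only off a measure-zero set -- initial conditions whose trajectory lands exactly on the discontinuity manifold, or for which the energy comparison $m_i^{-1}|p_i| = \Delta\pot$ on line~\ref{line:compare_energy} is an equality rather than a strict inequality -- and these null sets must be collected and shown to remain null under composition with the locally Lipschitz, volume-preserving outer maps $A$ and $B$. I would also confirm that the half-steps never evaluate the gradient on the discontinuity set, so that the only exceptional sets are those already isolated in Lemma~\ref{lem:symplecticity_coord_integrator}, and that the equality in distribution is read at the level of the induced transition kernel, so that the standard Metropolis reversibility argument of \citet{fang14} applies verbatim with the random map $\bPsi$ in place of a deterministic proposal.
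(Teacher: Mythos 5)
Your proof is correct and takes essentially the same route as the paper: both express Algorithm~\ref{alg:dhmc_integrator} as the palindromic composition of the smooth half-steps with the central coordinate-wise block, invoke Lemma~\ref{lem:symplecticity_coord_integrator} for that block, and conclude volume preservation and distributional reversibility factor by factor. The only cosmetic differences are that the paper verifies volume preservation by recognizing the paired half-steps as symplectic Euler maps while you check each shear's unit Jacobian directly, and that your explicit conjugation/telescoping argument and handling of the measure-zero exceptional sets spell out details the paper compresses into ``a symmetric composition of reversible maps is again reversible.''
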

\noindent The proofs are in the supplement Section~\ref{sec:dhmc_proofs_and_additional_results}. 
In the same section, we also establish the reversibility and volume-preserving property of discontinuous Hamiltonian dynamics under alternative kinetic energies.

\subsection{Irreducibility via randomized stepsize}
\label{sec:ergodicity_of_dhmc}
Reducible behaviors in \hmc{} are rarely observed in practice despite the subtleties in theoretical analysis \citep{livingstone16, bou2017randomized-hmc, durmus2017hmc-convergence}.
However, care needs to be taken when applying the coordinate-wise integrator for \dhmc{}; its use with a fixed stepsize $\epsilon$ results in a reducible Markov chain which is not ergodic.
To see the issue, consider the transition probability of multiple iterations of \dhmc{} based on the integrator of Algorithm~\ref{alg:dhmc_integrator}. Given the initial state $\btheta_0$, the integrator of Algorithm~\ref{alg:coord_integrator_with_laplace_mom} moves the $i$-th coordinate of $\btheta$ only by the distance $\pm \epsilon m_i^{-1}$ regardless of the values of the momentum variable. The transition probability in the $\btheta$-space with $\p$ marginalized out, therefore, is supported on a grid
\begin{equation}
\Omega = \left\{ (\btheta_I, \btheta_J) : \btheta_J = \btheta_{0, J} + \epsilon \bm{m} \odot \bm{k} \text{ for a vector of integers $\bm{k}$} \right\},
\end{equation}
where $\btheta_J$ as in \eqref{eq:disc_indiices}  denotes the coordinates of $\btheta$ with discontinuous conditionals. 

This pathological behavior can be avoided by randomizing the stepsize at each iteration, say $\epsilon \sim \textrm{Unif}(\epsilon_{\min}, \epsilon_{\max})$. Randomizing the stepsize additionally addresses a possibility that smaller stepsizes are required in some regions of the parameter space \citep{neal10}. While the coordinate-wise integrator does not suffer from the stability issue of the leapfrog scheme, the quantity $\epsilon m_i^{-1}$ nonetheless needs to match the length scale of $\theta_i$; see the supplement Section~\ref{sec:tuning_dhmc}.

\subsection{Metropolis-within-Gibbs with momentum as special case}
\label{sec:relation_to_metropolis_gibbs}
Consider a version of \dhmc{} in which all the parameters are updated with the coordinate-wise integrator of Algorithm~\ref{alg:coord_integrator_with_laplace_mom}; in other words, the integrator of Algorithm~\ref{alg:dhmc_integrator} is applied with $J = \{1, \ldots, d\}$ and an empty indexing set $I$. This version is a generalization of random-scan Metropolis-within-Gibbs, also known as one-variable-at-a-time Metropolis. We therefore refer to this version as \textit{Metropolis-within-Gibbs with momentum}.

We use $\pi_{\mathcal{E}}(\cdot)$ and $\pi_{\Phi}(\cdot)$ to denote the distribution of a stepsize $\epsilon$ and of a permutation $\varphi$ of $\{1, \ldots, d\}$, where $\pi_{\Phi}(\cdot)$ satisfies $\left\{\varphi(1), \ldots, \varphi(d)\right\} \overset{d}{=} \{\varphi(d), \ldots, \varphi(1)\}$. With these notations, each iteration of Metropolis-within-Gibbs with momentum can be expressed as follows:
\begin{enumerate}
	\item Draw $\epsilon \sim \pi_{\mathcal{E}}(\cdot)$, $\varphi \sim \pi_{\Phi}(\cdot)$, and $p_j \sim \textrm{Laplace}(\textrm{scale} = m_j)$ for $j = 1, \ldots, d$.
	\item Repeat for $L$ times a sequential update of the coordinate $(\theta_j, p_j)$ for $j = \varphi(1), \ldots, \varphi(d)$ via  Algorithm~\ref{alg:coord_integrator_with_laplace_mom} with stepsize $\epsilon$.
\end{enumerate}
In this version of \dhmc{}, the integrator exactly preserves the Hamiltonian and the acceptance-rejection step can be omitted. 

When $L = 1$, the above algorithm recovers random-scan Metropolis-within-Gibbs. This can be seen by realizing that Lines~\ref{line:compare_energy} -- \ref{line:mom_flip} of Algorithm~\ref{alg:coord_integrator_with_laplace_mom} coincide with the standard Metropolis acceptance-rejection procedure for $\theta_j$. More precisely, the coordinate-wise integrator updates $\theta_j$ to $\theta_j + \epsilon m_j^{-1} \textrm{sign}(p_j)$ only if
\vspace{-.25\baselineskip}
\begin{equation}
\exp \!\left\{ - \pot(\btheta^*) + \pot(\btheta) \right\} > \exp \big( - m_j^{-1} |p_j| \big) \overset{d}{=} \textrm{Unif}(0, 1),
\end{equation}
where the last distributional equality follows from the fact $m_j^{-1} |p_j| \overset{d}{=} \textrm{Exp}(1)$. To summarize, when taking only one numerical integration step, the version of \dhmc{} considered here coincides with Metropolis-within-Gibbs with a random scan order $\varphi \sim \pi_\Phi(\cdot)$ and a symmetric proposal $\theta_j \pm \epsilon m_j^{-1}$ for each parameter with $\epsilon \sim \pi_{\mathcal{E}}(\cdot)$.
We could also consider a version of \dhmc{} with a fixed stepsize $\epsilon = 1$ but with a mass matrix randomized $(m_1^{-1}, \ldots, m_d^{-1}) \sim \pi_{M^{-1}}(\cdot)$ before each numerical integration step; this version would correspond to a more standard Metropolis-within-Gibbs with independent univariate proposals. 

Being a generalization of Metropolis-within-Gibbs, \dhmc{} is guaranteed a superior performance:
\begin{corollary}
	Under any efficiency metric, which may account for computational costs per iteration, an optimally tuned \dhmc{} is guaranteed to outperform a class of random-scan Metropolis-within-Gibbs samplers as described above.
\end{corollary}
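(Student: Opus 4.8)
The plan is to derive the corollary directly from Theorem~\ref{thm:dhmc_as_met_with_gibbs}, exploiting the fact that the latter exhibits random-scan Metropolis-within-Gibbs as a genuine \emph{special case} of DHMC rather than merely a related algorithm. The whole argument then rests on the elementary principle that a supremum over a larger class cannot fall below a supremum over any of its subsets. Concretely, I would first collect all free ingredients of DHMC --- the partition $(I, J)$, the masses $\bm{m}$ and mass matrix $\mass_I$, the stepsize distribution $\pi_{\mathcal{E}}$, the permutation distribution $\pi_\Phi$, and the number of integration steps $L$ --- into a single tuning $t$ ranging over a configuration space $\mathcal{T}$. Each $t \in \mathcal{T}$ determines a Markov transition kernel $P_t$ together with a well-defined per-iteration computational cost, so that any efficiency metric $\mathcal{M}$ (a functional on samplers, possibly discounting for cost) assigns a value $\mathcal{M}(P_t)$.

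Next I would isolate the subfamily $\mathcal{T}_0 \subset \mathcal{T}$ realizing the Metropolis-within-Gibbs version: those tunings with $I = \emptyset$, $J = \{1, \ldots, d\}$, and $L = 1$. By Theorem~\ref{thm:dhmc_as_met_with_gibbs}, each $t \in \mathcal{T}_0$ produces a kernel $P_t$ identical to that of a random-scan Metropolis-within-Gibbs sampler with scan distribution $\pi_\Phi$ and symmetric increments $\pm \epsilon m_j^{-1}$; moreover, the fixed-stepsize, randomized-mass construction noted after Theorem~\ref{thm:dhmc_as_met_with_gibbs} shows that varying $\pi_{M^{-1}}$ over $\mathcal{T}_0$ recovers the full class of random-scan Metropolis-within-Gibbs samplers with independent symmetric univariate proposals. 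Since $t \mapsto P_t$ and its cost agree \emph{exactly} on $\mathcal{T}_0$ with the corresponding Metropolis-within-Gibbs sampler, any metric consistent in the sense that it scores coinciding kernels of equal cost identically satisfies $\mathcal{M}(P_t) = \mathcal{M}(\text{MwG}_t)$ for all $t \in \mathcal{T}_0$.

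The conclusion follows from the nesting inequality
\begin{equation*}
\sup_{t \in \mathcal{T}} \mathcal{M}(P_t) \; \geq \; \sup_{t \in \mathcal{T}_0} \mathcal{M}(P_t) \; = \; \sup_{\text{random-scan MwG}} \mathcal{M}(\text{MwG}),
\end{equation*}
so the optimally tuned DHMC attains an efficiency at least as large as that of the optimally tuned Metropolis-within-Gibbs sampler, whatever the metric. The only substantive point --- rather than an analytical obstacle --- is to verify that the DHMC configuration space genuinely contains every competing random-scan Metropolis-within-Gibbs sampler, and this is precisely what Theorem~\ref{thm:dhmc_as_met_with_gibbs} together with its accompanying remark supplies (for the class with symmetric univariate proposals). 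I would also flag two minor caveats that require no further work: that ``outperforms'' must be read as ``performs at least as well as,'' since in the worst case the optimum is achieved within $\mathcal{T}_0$ itself; and that if the suprema are not attained, the statement should be phrased as a comparison of optimal values rather than of realized optima.
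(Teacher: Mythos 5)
Your proposal is correct and matches the paper's (implicit) argument: the corollary is stated without a separate proof precisely because it follows from Theorem~\ref{thm:dhmc_as_met_with_gibbs}'s identification of Metropolis-within-Gibbs as the $L=1$, coordinate-wise-only special case of DHMC, which is exactly the nesting-of-tuning-spaces argument you formalize. Your caveats (reading ``outperforms'' as ``performs at least as well as,'' and noting that the recovered class consists of symmetric univariate random-walk proposals) are apt refinements of what the paper leaves informal.
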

\noindent In particular, an optimally tuned \dhmc{} will inherit the geometric ergodicity of a corresponding Metropolis-within-Gibbs sampler, sufficient conditions for which are investigated in \cite{johnson13}.
In practice, the addition of momentum to Metropolis-within-Gibbs allows for a more efficient update of correlated parameters as empirically confirmed in the supplement Section~\ref{sec:dhmc_vs_gibbs}.

Besides being a generalization Metropolis-within-Gibbs, \dhmc{} has a curious connection to the zig-zag sampler, a state-of-the-art non-reversible Monte Carlo algorithm.
The Laplace-momentum based Hamiltonian dynamics exhibits behaviors remarkably similar to the piece-wise deterministic Markov process underlying the zig-zag sampler (supplement Section~\ref{sec:relation_to_zigzag_details}).

\section{Numerical results}
\label{sec:simulation}

\subsection{Experimental set-up, benchmarks, and efficiency metric}
We use two challenging posterior inference problems to demonstrate the efficiency of \dhmc{} as a general-purpose sampler.
Additional numerical results in the supplement Section~\ref{sec:additional_numerical_results} further illustrate the breadth of its capability. Codes to reproduce the simulation results are available at \texttt{https://github.com/aki-nishimura/discontinuous-hmc}.

Few general and efficient approaches currently exist for sampling from a discrete parameter or a discontinuous target density. For each problem, therefore, we pick a few most appropriate general-purpose samplers to benchmark against. \citet{chopin17} compare a variety of algorithms on posterior distributions of binary classification problems. One of their conclusions is that, while random-walk Metropolis is known to scale poorly in the number of parameters \citep{roberts97}, Metropolis with a properly adapted proposal covariance is competitive with alternatives even in a 180-dimensional space.
As one of our benchmarks, therefore, we use random-walk Metropolis with proposal covariances proportional to estimated target covariances \citep{roberts97, haario01}. 
When the conditional densities can be evaluated efficiently and no strong dependence exists among the parameters,  Metropolis-within-Gibbs with component-wise adaptation can scale better than joint sampling via random-walk Metropolis \citep{haario2005componentwiseadap}. This approach thus is used as another benchmark. 

For models with discrete parameters, we also compare with the \nutsGibbs{} approach \citep{pymc16}. Conditionally on discrete parameters, continuous parameters are updated by the \nuts{} of \cite{hoffman14}. The standard implementation then updates discrete parameters with univariate Metropolis, but here we implement full conditional univariate updates via manually-optimized multinomial samplings. In our examples, these multinomial samplings take little time relative to continuous parameter updates, tilting the comparison in favor of \nutsGibbs{}.
We use the identity mass matrix for the \nuts{} to make a fair comparison to \dhmc{} with the identity mass.

In all our numerical results, continuous parameters with range constraints are transformed into unconstrained ones to facilitate sampling. More precisely, the constraint $\theta > 0$ is handled by a log transform $\theta \to \log \theta$ and  $\theta \in [0, 1]$ by a logit transform $\theta \to \log \left\{ \theta / (1 - \theta) \right\}$ as done in Stan and PyMC  \citep{stan16,pymc16}. 
For each example, the stepsize and path length for \dhmc{} were manually adjusted over short preliminary runs by visually examining trace plots. The stepsize for the continuous parameter updates of \nutsGibbs{} was adjusted analogously.

Efficiencies of the algorithms are compared through effective sample sizes \citep{geyer11}. As is commonly done in the \mcmc{} literature, we compute the effective sample sizes of the first and second moment estimators for each parameter and report the minimum value across all the parameters. \expandafter\MakeUppercase \ess{}s are estimated using the method of batch means with 25 batches \citep{geyer11}, averaged over the estimates from 8 independent chains.

\subsection{Jolly-Seber model: estimation of unknown open population size and survival rate from multiple capture-recapture data}
\label{sec:jolly_seber_example}
The Jolly-Seber model and its extensions are widely used in ecology to estimate unknown population sizes along with related parameters of interest \citep{schwarz99}. The model is motivated by the following experimental design. Individuals from a particular species are captured, marked, and released back to the environment. This procedure is repeated over multiple capture occasions. At each occasion, the number of marked and unmarked individuals among the captured ones are recorded. Individuals survive from one capture occasion to another with an unknown survival rate. The population is assumed to be ``open'' so that individuals may enter, either through birth or immigration, or leave the area under study.

In order to be consistent with the literature on capture-recapture models, the notations within this section will deviate from the rest of the paper. Assuming that data are collected over $i = 1, \ldots, T$ capture occasions, the unknown parameters are $\{U_i, p_i\}_{i=1}^{T}$ and $\{\phi_i\}_{i=1}^{T-1}$, representing
\vspace{-.3\baselineskip}
\begin{equation*}
\begin{aligned}
U_i &= \, \text{number of unmarked animals right before the $i$th capture occasion;} \\
p_i &= \, \text{capture probability of each animal at the $i$th capture occasion;} \\
\phi_i &= \, \text{survival probability of each animal from the $i$th to $(i+1)$th capture occasion.}
\end{aligned}
\end{equation*}
\vspace{-.75\baselineskip}

\noindent We assign standard objective priors $p_i, \phi_i \sim \textrm{Unif}(0, 1)$ and $\pi(U_1) \propto U_1^{-1}$. The parameters $U_2, \ldots, U_T$ require a more complex prior elicitation; this is described in the supplement Section~\ref{sec:jolly_seber_details} along with the likelihood function and other details on the Jolly-Seber model.

We take the black-kneed capsid population data from \citet{jolly65} as summarized in \citet{seber82}. The data record the capture-recapture information over $T = 13$ successive capture occasions, giving rise to a 38-dimensional posterior distribution involving 13 discrete parameters. We use the log-transformed embedding for the discrete parameter $U_i$'s (Section~\ref{sec:embedding}).
The proposal covariance for random-walk Metropolis is chosen by pre-computing the true posterior covariance with a long adaptive Metropolis chain \citep{haario01} and then scaling it according to \citet{roberts97}.
\expandafter\MakeUppercase \dhmc{} can also take advantage of the posterior covariance information, so we also try using a diagonal mass matrix whose entries are set according to the estimated posterior variance of each parameter (supplement Section~\ref{sec:tuning_dhmc}).  Starting from stationarity, we run $10^4$ iterations of \dhmc{} and \nutsGibbs{} and $5 \times 10^5$ iterations of Metropolis.

The performance of each algorithm is summarized in Table~\ref{tab:js_performance_summary} where ``DHMC (diagonal)'' and ``DHMC (identity)'' indicate \dhmc{} with a diagonal and identity mass matrix respectively. The table clearly indicates a superior performance over \nutsGibbs{} and Metropolis with approximately 60 and 7-fold efficiency increase respectively when using a diagonal mass matrix. The posterior distribution exhibits high negative correlations between $U_i$ and $p_i$ (Figure~\ref{fig:js_2d_posterior}). All the algorithms record the worst \ess{} in $p_1$, but the mixing of \nutsGibbs{} suffers most as $U_i$ and $p_i$ are updated conditionally.

\begin{table}
	\caption{Performance summary of each algorithm on the Jolly-Serber model example. ``DHMC'' and ``ESS'' in the table stand for \dhmc{} and effective sample size. The term $(\pm \ldots)$ indicates the error estimate, twice the standard deviations, of our \ess{} estimators. Path length is averaged over each iteration. ``Iter time''  shows the computational time per iteration of each algorithm relative to the fastest one.}
	\label{tab:js_performance_summary} 
	\centering
	\begin{tabular}{c|c|c|c|c}
		& ESS per 100 samples & ESS per minute & Path length & Iter time \\ 
		\hline
		DHMC (diagonal) & 45.5 ($\pm$ 5.2) & 424 & 45 & 87.7 \\ 
		\hline
		DHMC (identity) & 24.1 ($\pm$ 2.6) & 126 & 77.5 & 157 \\ 
		\hline
		\nutsGibbs{} & 1.04 ($\pm$ 0.087) & 6.38 & 150 & 133 \\ 
		\hline
		Metropolis & 0.0714 ($\pm$ 0.016) & 58.5 & 1 & 1
	\end{tabular}
\end{table}

\subsection{Generalized Bayesian belief update based on loss functions}
\label{sec:pac_bayes_example}
Motivated by model misspecification and difficulty in modeling all aspects of a data generating process, \citet{bissiri16} propose a generalized Bayesian framework, which replaces the log-likelihood with a surrogate based on a utility function.  Given an additive loss $\ell(y, \btheta)$ for the data $y$ and parameter of interest $\btheta$, the prior $\pi(\btheta)$ is updated to obtain the generalized posterior: 
\begin{equation}
\label{eq:pac_posterior}
\pi_{\textrm{post}}(\btheta) \propto \exp\!\left\{- \ell(\by, \btheta)\right\} \pi(\btheta).
\end{equation}
While \eqref{eq:pac_posterior} coincides with a pseudo-likelihood type approach,  \citet{bissiri16} derives the formula as a coherent and optimal update from a decision theoretic perspective.

Here we consider a binary classification problem with an error-rate loss:
\begin{equation}
\label{eq:error_rate_loss}
\ell(\by, \bm{\beta}) = \textstyle{\sum}_{i = 1} \mathbbm{1} \left\{ y_i \bm{x}_i^\intercal \bm{\beta} < 0 \right\},
\end{equation}
where $y_i \in \{-1, 1\}$, $\bm{x}_i$ is a vector of predictors, and $\bm{\beta}$ is a regression coefficient. The target distribution of the form \eqref{eq:pac_posterior} based on the loss function \eqref{eq:error_rate_loss} is suggested as a challenging test case by \citet{chopin17}. 
We use the \textsc{SECOM} data from the \textsc{UCI} machine learning repository, which records various sensor data that can be used to predict the production quality of a semi-conductor, measured as ``pass'' or ``fail.'' We first remove the predictors with more than 20 missing cases and then remove the observations that still had missing predictors, leaving  1,477 cases with 376 predictors. All the predictors are normalized and the regression coefficients $\beta_i$'s are given $\normal(0, 1)$ priors. Figure~\ref{fig:pac_bayes_posterior_conditional} illustrates the complexity of the target distribution.

\begin{figure}
\vspace{-.25\baselineskip}
	\hspace{-.03\textwidth}
	\begin{minipage}[l]{.5\linewidth}
		\includegraphics[width=\linewidth]{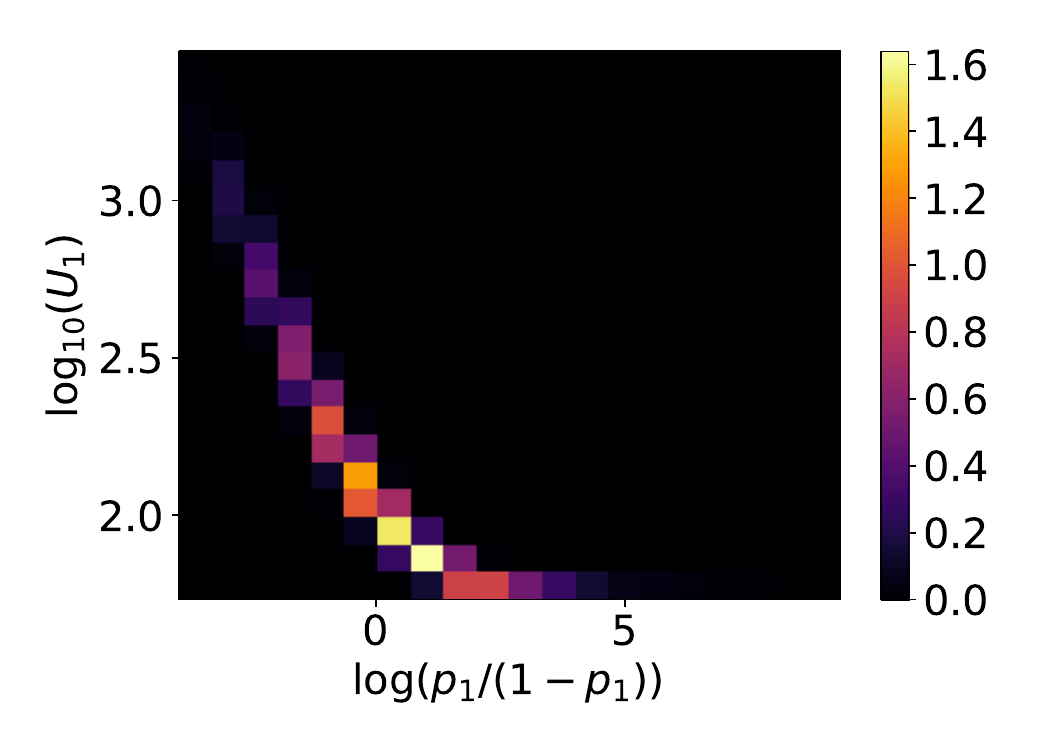}
	\end{minipage}
	~
	\begin{minipage}{.48\linewidth}
		\includegraphics[width=.95\linewidth]{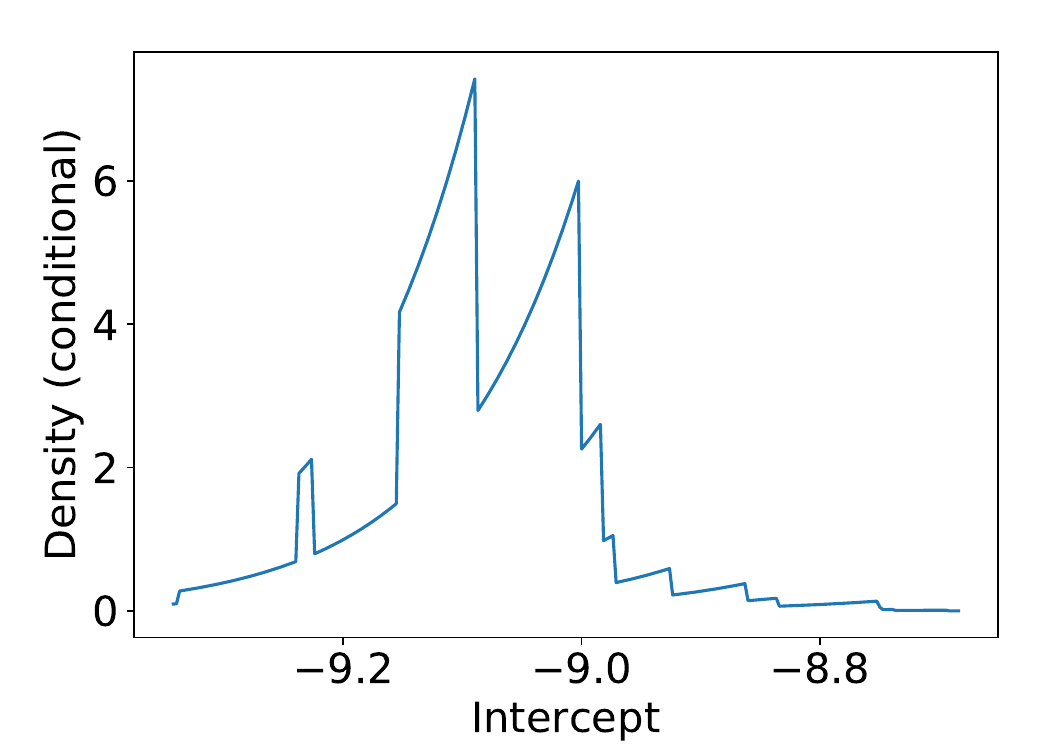}
	\end{minipage}
	\vspace{-2\baselineskip}
\end{figure}
\begin{figure}
	\hspace{-.25\textwidth}
	\begin{minipage}{.45\linewidth}
		\caption{The posterior marginal of $(p_1, U_1)$ with \newline parameter transformations, estimated from \newline the Monte Carlo samples.}
		\label{fig:js_2d_posterior}
	\end{minipage}
	\hspace{.03\textwidth}
	\begin{minipage}{.47\linewidth}
		\caption{The posterior conditional density of the intercept parameter in the generalized Bayes example. The other parameters are fixed at the posterior draw that recorded the highest posterior density among the Monte Carlo samples. The density is not continuous since the loss function is not.}
		\label{fig:pac_bayes_posterior_conditional} 
	\end{minipage}
\end{figure}

In tuning the proposal covariance of Metropolis for this example, adaptive Metropolis performed so poorly that we instead use $10^5$ iterations of \dhmc{} to estimate the posterior covariance. Scaling the proposal covariance for random-walk Metropolis according to \citet{roberts97} resulted in an acceptance probability of less than 0.04, so we scaled the proposal covariance to achieve the acceptance probability of 0.234 with stochastic optimization \citep{andrieu08}. We also found the posterior correlation to be very modest in this example, with the ratio of the largest to smallest eigenvalues of the estimated posterior covariance matrix being $46 \approx 6.8^2$. This suggested that coordinate-wise updates may be competitive, so we implemented Metropolis-within-Gibbs as an additional benchmark. The parameters are updated one at a time with the acceptance rate calibrated around 0.44 as recommended in \citet{gelman96}. 
We run \dhmc{} for $10^4$ iterations, Metropolis for $10^7$ iterations, and Metropolis-within-Gibbs for $5 \times 10^4$ iterations from stationarity.

Table~\ref{tab:pac_bayes_performance_summary} summarizes the performance of each algorithm. \expandafter\MakeUppercase \dhmc{} with identity mass matrix outperforms Metropolis and Metropolis-within-Gibbs by a factor of 330 and 2 respectively. Using a diagonal mass matrix yields only a minor improvement here as the posterior displays similar scales of uncertainty in all the parameters. The mixing of Metropolis suffers substantially from the dimensionality of the target. 
Conditional updates of Metropolis-within-Gibbs mix well in this example due to weak dependence among the parameters. On the other hand, as demonstrated in the example here and in Section~\ref{sec:jolly_seber_example}, \dhmc{} not only scales well in the number of parameters but also efficiently handles distributions with strong correlations. 

\begin{table}
	\caption{Performance summary of each algorithm on the generalized Bayesian posterior example. ``DHMC'' and ``ESS'' in the table stand for \dhmc{} and effective sample size. The term $(\pm \ldots)$ is the error estimate of our \ess{} estimators. Path length is averaged over each iteration. ``Iter time'' shows the computational time for one iteration of each algorithm relative to the fastest one.}
	\label{tab:pac_bayes_performance_summary} 
	\centering
	\begin{tabular}{c|c|c|c|c}
		& ESS per 100 samples & ESS per minute & Path length & Iter time \\ 
		\hline
		DHMC (identity) & 26.3 ($\pm$ 3.2) & 76 & 25 & 972 \\ 
		\hline
		Metropolis & 0.00809 ($\pm$ 0.0018) & 0.227 & 1 & 1 \\ 
		\hline
		Metropolis-within-Gibbs & 0.514 ($\pm$ 0.039) & 39.8 & 1 & 36.2
	\end{tabular}
\end{table}


\bibliographystyle{biometrika}
\bibliography{DHMC}{}%

\newcommand{\transpose}{\text{\raisebox{.5ex}{$\intercal$}}}
  
\newpage  
  
\markboth{Nishimura et~al.}{Discontinuous \hmc{}}  
  
\renewcommand{\thealgocf}{S\arabic{algocf}}  
\renewcommand{\thesection}{S\arabic{section}}   
\renewcommand{\thetable}{S\arabic{table}}     
\renewcommand{\thefigure}{S\arabic{figure}}  
\renewcommand{\theequation}{S\arabic{equation}}  
  
\setcounter{algocf}{0}  
\setcounter{section}{0}  
\setcounter{figure}{0}  
\setcounter{table}{0}  
\setcounter{equation}{0}  
  
{  
	\bigskip  
	\bigskip  
	\bigskip  
	\begin{center}  
		{\Large \bf  Supplement to ``Discontinuous Hamiltonian Monte Carlo for discrete parameters and discontinuous likelihoods'' \par}  
	\end{center}  
	\medskip  
}

\section{Behavior of leapfrog integrator on discontinuous target}
\label{sec:leapfrog_on_discontinuous_target}
As mentioned in Section~\ref{sec:hmc_fails_on_discontinuity}, in the presence of discontinuity, the leapfrog integrator in general incurs unbounded errors that do not decrease even as $\epsilon \to 0$. 
To see this, consider a discontinuous target $\pi_c(\theta)$ which is continuously differentiable except at $\theta = 0$, is constant on $\theta \in [-\delta, 0) \cup (0, \delta]$, and satisfies $\log \pi_c(-\delta) - \log \pi_c(\delta) = c > 0$. In particular, we have $\nabla \log \pi_c(\theta) = 0$ for $0 < |\theta| < \delta$ and hence the leapfrog trajectory evolves with a constant momentum on $\theta \in [-\delta, \delta]$. In other words,
\begin{equation*}
	\theta(n\epsilon) = \theta_0 + n \epsilon p_0, \ \ 
	p(n\epsilon) = p_0,
\end{equation*}
provided $|\theta_0 + k \epsilon p_0| < \delta$ for all $k = 0, 1, \ldots, n$. Starting from $\theta_0 < 0$,  when the leapfrog trajectory crosses $\theta = 0$ so that $\theta(k \epsilon) < 0 < \theta\{(k + 1) \epsilon\}$, it incurs the error of
\begin{equation*}
\begin{aligned}
&
	H\!\left[
		\theta\{(k + 1) \epsilon\}, p\{(k + 1) \epsilon\}
	\right]  - H\!\left\{ \theta(k \epsilon), p(k \epsilon) \right\} 
	\\
	&\hspace*{8em} = 
		- \log \pi_c[\theta\{(k + 1) \epsilon\}] + \log \pi_c\{\theta(k \epsilon)\}
	= c.
\end{aligned}
\end{equation*}

\section{Integrator for Gaussian momentum-based discontinuous dynamics}
\label{sec:ed_integrator_for_gaussian_mom}
Here we describe an implementation of the integrator proposed by \cite{pakman13} and \cite{afshar15}. The integrator is designed to approximate a discontinuous Hamiltonian dynamics with a Gaussian momentum corresponding to the kinetic energy $K(\p) = \| \p \|^2 / 2$. For simplicity, we assume that a parameter space $\Theta$ consists only of the embedded discrete parameters as described in Section~\ref{sec:embedding}, so that the target $\pi_{\Theta}(\cdot)$ is piecewise constant with the discontinuity set consisting of the boundaries of hyper-cubes.  The integrator is energy-preserving in this simplified setting but not so for more general discontinuous dynamics. The pseudo code is given in Algorithm~\ref{alg:ed_integrator_with_gaussian_mom}.


\begin{algorithm}
	\caption{Integrator for Gaussian momentum-based discontinuous dynamics}
	\label{alg:ed_integrator_with_gaussian_mom}
	\DontPrintSemicolon
	\SetKwInOut{Input}{Input}
	\Input{ initial state $(\btheta, \p)$, stepsize $\epsilon$ }
	\BlankLine
	$t \gets 0$ \;
	\While{$t < \epsilon$}{
		$t_e \gets$ the time until reaching the next discontinuity \;
		\eIf{$t + t_e > \epsilon$}{
			$\btheta \gets \btheta + (\epsilon - t) \p$ \;
			$t \gets \epsilon$ \;
		}{
			$\btheta \gets \btheta + t_e \p$ \;
			$i \gets$ the index of the axis orthogonal to the discontinuity plane at $\btheta$ \;
			$\Delta \pot_e \gets$ the potential energy difference  \;
			\eIf{$p_i^2 / 2 > \Delta \pot_e$}{
				$p_i \gets \sqrt{p_i^2 - 2 \Delta \pot_e}$ \;
			}{
				$p_i \gets - p_i$ \;
			}
			$t \gets t + t_e$
		}
	}
\end{algorithm}


\section{Empirical verification of ergodicity and unbiasedness of \dhmc{}}
\label{sec:dhmc_exactness_check}
To empirically back up the theoretical results of Section~\ref{sec:theory_of_dhmc}, here we use \dhmc{} to sample from a simple posterior distribution with closed-form marginal distributions. The correctness of \dhmc{} has been independently verified by \cite{gram2018dhmc-for-probabilistic-programs}, in which the \dhmc{} samples are compared to the outputs of existing probabilistic programming softwares.

We consider an observation model $y \given q, N \sim \mathrm{Binomial}(q, N)$ where both the success rate $q$ and sample size $N$ are unknown. We assign an objective prior $\pi(\nunknown) \propto \nunknown^{-1}$ \citep{berger12} and a beta prior $q \sim \textrm{Beta}(\alpha, \beta)$. As the particular choice is immaterial for the purpose of our simulation, we just pick $\alpha = \beta = 2$ and set $y = 100$. Closed-form expressions for the posterior marginals of $N$ and $q$ are given in Section~\ref{sec:dhmc_exactness_check_posterior_derivation} below.

To sample from the posterior, we use the log-transformed embedding of $N$ (Section~\ref{sec:embedding}). The parameter $q$ is mapped to the real line through a logit transform $q \to \log\{q / (1 - q)\}$. We use the integrator of Section~\ref{sec:mixing_momentum_dist} (Algorithm~\ref{alg:dhmc_integrator}) with the Laplace momentum for $N$ and Gaussian momentum for $q$. The stepsize $\epsilon$ is jittered in the range $[0.08, 0.1]$ and the number of numerical integration steps in the range $[15, 20]$. 

Figure~\ref{fig:dhmc_empirical_vs_exact_density} shows the empirical distributions of $N \given y$ and $q \given y$ from $10^6$ iterations of \dhmc{}. The empirical distributions are indistinguishable from the exact distributions indicated by the orange lines. Additionally, the trace plot in Figure~\ref{fig:dhmc_validity_check_traceplots} shows that \dhmc{} can induce a large transition in the parameter $N$ with only a small number of numerical integration steps. This means that the \dhmc{} integrator often jumps through a large number of discontinuities along the parameter $N$ at each numerical integration step. This behavior introduces no bias as the integrator remains reversible and volume-preserving regardless of its stepsize as discussed in the main manuscript Section~\ref{sec:theory_of_dhmc}.

\begin{figure}
	\includegraphics[width=\linewidth]{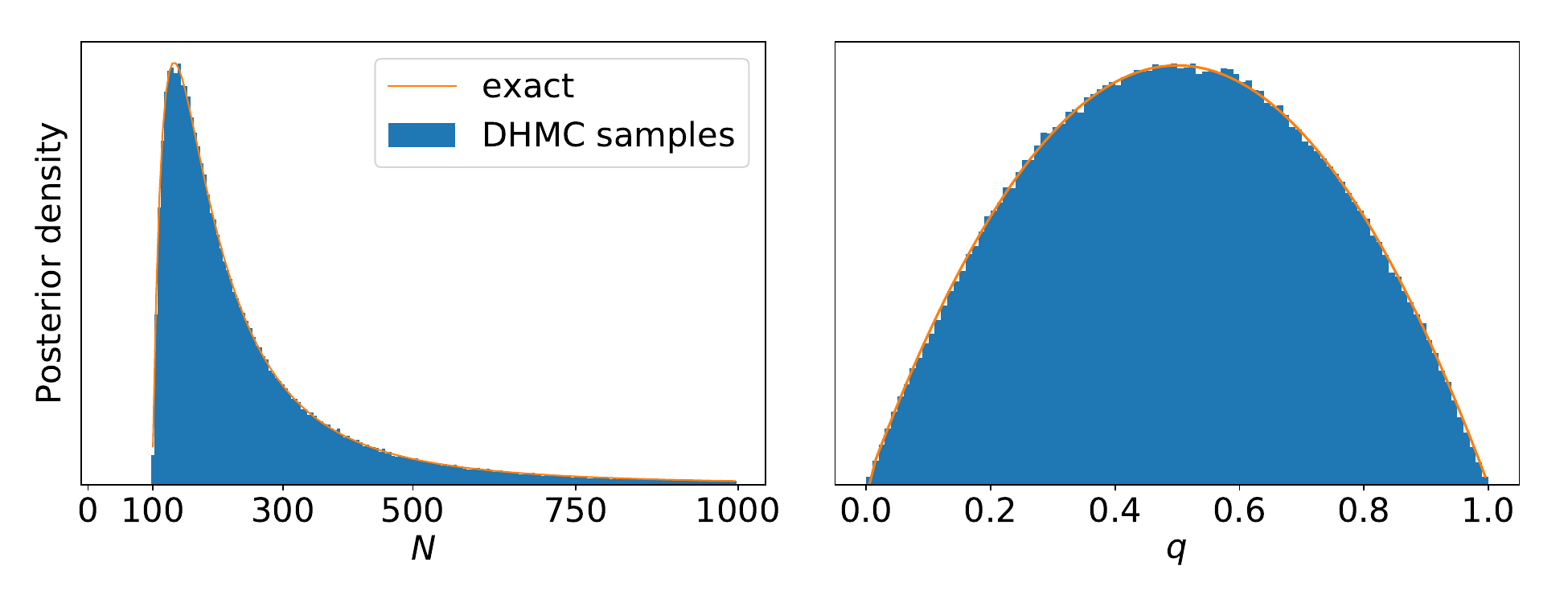}
	\caption{Empirical distributions of the \dhmc{} samples generated for the target distribution as described in Section~\ref{sec:dhmc_exactness_check_posterior_derivation}. The orange lines show the exact posterior mass and density functions computed from the closed-form expressions. The unknown sample size parameter $N$ has no posterior probability below the observed number of successes $y = 100$.}
	\label{fig:dhmc_empirical_vs_exact_density}
\end{figure}

\begin{figure}
	\includegraphics[width=\linewidth]{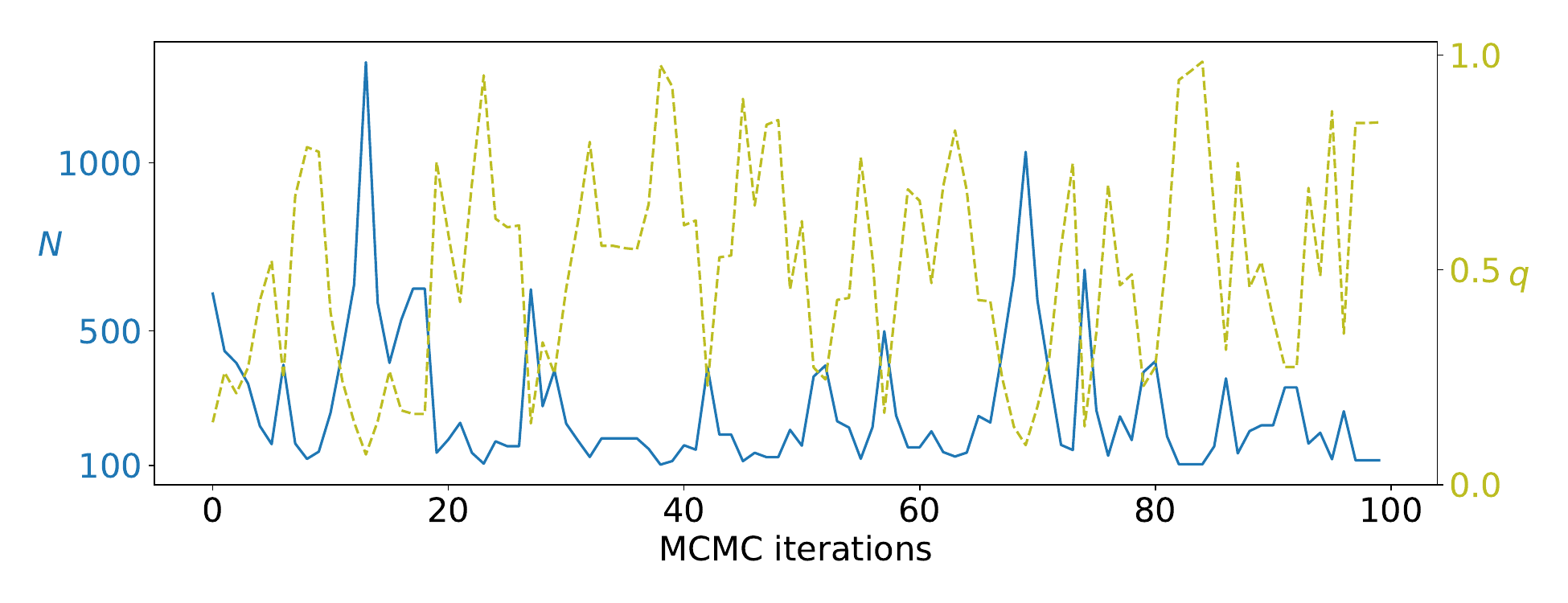}
	\caption{Trace plots for the first 100 \dhmc{} samples generated for the target distribution as described in Section~\ref{sec:dhmc_exactness_check_posterior_derivation}. The blue line and left $y$-axis indicates the parameter values of $N$, while the olive line and right $y$-axis indicates the parameter values of $q$.}
	\label{fig:dhmc_validity_check_traceplots}
\end{figure}

\subsection{Derivation of the posterior marginals}
\label{sec:dhmc_exactness_check_posterior_derivation}
For the model and priors described above, we have
\begin{equation}
\begin{aligned}
\pi(\nunknown, q \given \nobs)
	&\propto \frac{\nunknown !}{(\nunknown - \nobs)!} q^\nobs (1 - q)^{\nunknown - \nobs} \pi(q) \pi(N) 
	\propto \frac{(\nunknown - 1)!}{(\nunknown - \nobs)!} q^{\nobs + \alpha - 1} (1 - q)^{\nunknown - \nobs + \beta - 1} 
\end{aligned}
\end{equation}
Integrating over $q$, we obtain
\begin{equation}
\label{eq:toy_model_posterior_marginal_general}
\begin{aligned}
\pi(\nunknown \given \nobs)
	&\propto \frac{(\nunknown - 1)!}{(\nunknown - \nobs)!}
	\frac{
		\Gamma(\nunknown - \nobs + \beta) 
	}{
		\Gamma(\nunknown + \alpha + \beta)
	}
	=  \frac{(\nunknown - 1)!}{(\nunknown - \nobs)!}
	\frac{
		(\nunknown - \nobs + \beta - 1)! 
	}{
		(\nunknown + \alpha + \beta - 1)!
	} 
\end{aligned}
\end{equation}
where the equality holds when $\alpha$ and $\beta$ take positive integer values. 
The choice $\alpha = \beta = 2$ yields
\begin{equation}
\begin{aligned}
\pi(\nunknown \given \nobs)
	&\propto \frac{
		N - \nobs + 1
	}{
		(N + 3) (N + 2) (N + 1) N
	}
\end{aligned}
\end{equation}
We can compute the normalized mass function of $N \given y$ to high accuracy by truncating it at a suitably large number. Having computed $\pi(N \given y)$, we can compute the posterior marginal of $q$ via the law of total expectation $\pi(q \given y) = \sum_N \pi(q \given N, y) \pi(N \given y)$ where $q \given N, y \sim \mathrm{Beta}(y + \alpha, N - y + \beta)$.

\section{Relative advantages of joint and coordinate-wise updates on continuous parameters}
\label{sec:joint_vs_coordinate_update}
While the coordinate-wise update of Algorithm~\ref{alg:coord_integrator_with_laplace_mom} in the main manuscript generates a valid proposal whether or not $\pot(\btheta)$ has discontinuities along $\theta_i$, the joint update of continuous parameters as in Algorithm~\ref{alg:dhmc_integrator} has some computational advantages. First, when there is little conditional independence structure, calculating $\nabla_{\btheta_I} U(\btheta)$ is more computationally efficient than carrying out $|I|$ successive conditional density evaluations. Even when there is some conditional independence structure, however, computing $\nabla_{\btheta_I} U(\btheta)$ may still be substantially faster as an interpreter or compiler of a programing language can more easily optimize the required computation. Thus the joint update typically demands less computing time. On the other hand, the coordinate-wise updates have an advantage of being rejection-free by virtue of exact energy-preservation. The coordinate-wise update may thus be preferable for posteriors with substantial conditional independence structure such as those in latent Markov random field models.

\section{Tuning mass matrix and integrator stepsize of \dhmc{}}
\label{sec:tuning_dhmc}

\subsection{Role and tuning of mass matrix}
\label{sec:tuning_dhmc_mass}
As in the case of traditional \hmc{}, using a non-identity mass matrix has the effect of preconditioning a target distribution through reparametrization \citep{neal10}. More precisely, for a matrix $\bm{A}_I$ and a diagonal matrix $\bm{A}_J$, the performance of \dhmc{} under parametrization $(\bm{A}_I \btheta_I, \bm{A}_J \btheta_J, \p_I,  \p_J)$ is identical to that for $(\btheta_I, \btheta_J, \bm{A}_I^\intercal \p_I, \bm{A}_J^\intercal \p_J)$. The choice of a mass matrix for a Gaussian momentum is a well-studied topic \citep{neal10, girolami11}.
We can reason similarly for a Laplace momentum. We generally expect that sampling is facilitated by a reparametrization $\theta_j \to \theta_j / \textrm{var}(\theta_j)^{1/2} $ for $j \in J$. This is effectively achieved, given the relation between mass matrix choice and parameter transformation, by choosing the mass to be $m_j \approx \textrm{var}(\theta_j)^{-1/2}$. The variances can be estimated from a small number of preliminary \dhmc{} iterations. 

While the above discussion focuses on a diagonal mass matrix, it is also possible to encode the correlation structure of the target distribution into the Laplace momentum $p_J$. To precondition $\theta_J$ by reparametrization $\theta_J \to \mass_J^{-1/2} \theta_J$, we can define the distribution of $p_J$ to have independent Laplace distributions along the eigenvectors $u_1, \ldots, u_{k}$ of $\mass_J$:
	\begin{equation}
	p_J = \textstyle \sum_j \tilde{p}_j u_j 
		\ \text{ for } \, \tilde{p}_j \sim \text{Laplace}(\text{scale} = \delta_j),
	\end{equation}
so that $\text{Var}(p_J) = \mass_J = U D U^\transpose$ where $U = [u_1 | \ldots | u_k]$ and $D = \text{diag}(\delta_1, \ldots, \delta_k)$. The coordinate-wise integrator of Algorithm~\ref{alg:coord_integrator_with_laplace_mom} can then be applied along each $u_j$ one at a time. The new coordinate is likely to break the original conditional independence structures among the parameters, however, making each coordinate-wise update more expensive.
To incorporate a correlation structure while preserving some of the conditional independence structure, one possibility is to choose a block-diagonal $\mass_J$.
 
\subsection{Choice and tuning of integrator stepsize}
\label{sec:tuning_dhmc_stepsize}
The stepsize $\epsilon$ should be adjusted so that $\epsilon m_j^{-1}$ has the same order of magnitude as a typical scale of the conditional distribution of $\theta_j$. Unlike a leapfrog integrator that becomes unstable as $\epsilon$ increases, the coordinate-wise integrator remains exactly energy-preserving but at some point a large stepsize will cause \dhmc{} to ``get stuck'' at the current state. The numerical integration scheme of \dhmc{} will keep flipping the momentum $p_j \gets - p_j$ (Line~\ref{line:mom_flip} of Algorithm~\ref{alg:coord_integrator_with_laplace_mom}) without updating $\theta_j$  until the following condition is met:
\begin{equation}
\label{eq:coord_update_requirement}
\pot\big\{\btheta + \epsilon m_j^{-1} \textrm{sign}(p_j) \e_j\big\} - \pot(\btheta) < m_j^{-1} |p_j| \overset{d}{=} \textrm{Exponential}(1),
\end{equation}
where $\e_j$ denotes the $j$-th standard basis vector. When $\epsilon m_j^{-1}$ becomes larger than a typical scale of $\theta_j$, the condition \eqref{eq:coord_update_requirement} becomes unlikely to be satisfied, leading to infrequent updates of $\theta_j$.

We now consider how to tune the stepsize while the mass matrix fixed. This can be alternated with tuning of the mass matrix as suggested above to calibrate both the tuning parameters. To this end, we propose the following statistics:
\begin{equation}
\label{eq:stepsize_tune_statistic}
\begin{aligned}
&\mathbb{P}_{\pi_{\Theta} \times \pi_P} \left[ \pot\big\{\btheta + \epsilon m_j^{-1} \textrm{sign}(p_j) \e_j\big\} - \pot(\btheta) > m_j^{-1} |p_j|  \right] \\
&\hspace{3em} = \mathbb{E}_{\pi_{\Theta} \times \pi_P} \left\{ \min \left(1, \, \exp\Big[\pot(\btheta) -\pot\big\{\btheta + \epsilon m_j^{-1} \textrm{sign}(p_j) \e_j\big\}  \Big] \right) \right\}.
\end{aligned}
\end{equation}
The above statistics play a role analogous to the acceptance rate of Metropolis proposals. The statistics \eqref{eq:stepsize_tune_statistic} can be estimated, for example, by counting the frequency of momentum flips during each \dhmc{} iteration, and can then be used to tune the stepsize through stochastic optimization \citep{andrieu08, hoffman14}. One would want the statistics to be well above zero but not too close to 1, balancing the mixing rate and computational cost of each \dhmc{} iteration. Theoretical analysis of the optimal statistics value is beyond the scope of this paper, but the value 0.7 $\sim$ 0.9 is perhaps reasonable in analogy with the optimal acceptance rate for \hmc{} \citep{betancourt14}.

\section{Theoretical properties of discontinuous \hmc{}: proofs and additional results }
\label{sec:dhmc_proofs_and_additional_results}

\subsection{Proof of Lemma~\ref{lem:symplecticity_coord_integrator} and Theorem~\ref{thm:dhmc_integrator}}
\label{sec:reversibility_vol_preservation_proof}

\begin{proof}[Lemma~\ref{lem:symplecticity_coord_integrator}]
	Assume $p_i \neq 0$ for now and let $\e_i$ denote the $i$th standard basis vector. Then one step of Algorithm~\ref{alg:coord_integrator_with_laplace_mom} corresponds to a map $\bPsi_{i, \epsilon} : (\btheta, \p) \to (\btheta^*, \p^*)$ where
	\begin{equation}
	\label{eq:coord_update_pass}
	\btheta^* = \btheta + \epsilon m_i^{-1} \sign(p_i) \e_i, \quad \p^* = \p - m_i \left\{ \pot(\btheta^*) - \pot(\btheta) \right\} \e_i
	\end{equation}
	if $\pot\{\btheta + \epsilon m_i^{-1} \textrm{sign}(p_i) \e_i\} - \pot(\btheta) > m_i^{-1} p_i$, and otherwise
	\begin{equation}
	\label{eq:coord_update_bounce}
	\btheta^* = \btheta, \quad \p^* = -\p.
	\end{equation}
	The update equations \eqref{eq:coord_update_pass} and \eqref{eq:coord_update_bounce} are well-defined and differentiable except on the measure-zero set $S$, which we define momentarily. Under both \eqref{eq:coord_update_pass} and \eqref{eq:coord_update_bounce}, we have $\partial \btheta^* / \partial \p = 0$ and can easily show that
	\begin{equation}
	\det \left\{ \frac{\partial (\btheta^*, \p^*)}{\partial (\btheta, \p)} \right\}
	= \det \left( \frac{\partial \btheta^*}{\partial \btheta} \right) \det \left( \frac{\partial \p^*}{\partial \p} \right)
	= 1,
	\end{equation}
	establishing the volume-preservation. The reversibility as defined in \eqref{eq:reversibility} can be directly verified by solving the update equations \eqref{eq:coord_update_pass} and \eqref{eq:coord_update_bounce} for $(\btheta, -\p)$ as a function of $(\btheta^*, -\p^*)$. 
	
	We now quantify the set $S$ on which the above argument may break down and show that it has measure zero. Let $\mathcal{D}$ denote the discontinuity set of $\pot(\btheta)$ and $\mathcal{D} + \bv$ denote a set of points in $\mathcal{D}$ shifted by a vector $\bv$. It is easy to see that the update equations \eqref{eq:coord_update_pass} and \eqref{eq:coord_update_bounce} are well-defined and differentiable except when $(\btheta, \p)$ belongs to one of the sets below:
	\begin{equation}
	\label{eq:problematic_measure_zero_set}
	\mathcal{D} \times \mathbb{R}^d, \
	\left( \mathcal{D} \pm \epsilon m_i^{-1} \e_i \right) \times \mathbb{R}^d, \
	\left\{ p_i = 0 \right\}, \
	\left\{ \pot\{\btheta + \epsilon m_i^{-1} \textrm{sign}(p_i) \e_i\} - \pot(\btheta) = m_i^{-1} p_i \right\}.
	\end{equation}
	Each of these sets above corresponds to lower-dimensional manifolds of the parameter space and hence have measure zero. We define the set $S$ as the union of all the sets \eqref{eq:problematic_measure_zero_set} over $i = 1, \ldots, d$. Being a finite union of measure-zero sets, the set $S$ also has measure zero.
	
	Lastly, we prove the reversibility of multiple coordinate updates corresponding to a map $\bPsi_{\varphi(d), \epsilon} \circ \ldots \circ \bPsi_{\varphi(1), \epsilon}$ with a random permutation $\varphi$. From the reversibility of each $\bPsi_{i, \epsilon}$, we deduce that
	\begin{align}
	\R \circ \left( \bPsi_{\varphi(d), \epsilon} \circ \ldots \circ \bPsi_{\varphi(1), \epsilon} \right) \circ \R
	&= \bPsi_{\varphi(d), \epsilon}^{-1} \circ \ldots \circ \bPsi_{\varphi(1), \epsilon}^{-1}
	= \left(\bPsi_{\varphi(1), \epsilon} \circ \ldots \circ \bPsi_{\varphi(d), \epsilon} \right)^{-1}.
	\end{align}
	By our assumption on the distribution of $\varphi$, we have
	\begin{equation}
	\left(\bPsi_{\varphi(1), \epsilon} \circ \ldots \circ \bPsi_{\varphi(d), \epsilon} \right)^{-1}
	\overset{d}{=} \left(\bPsi_{\varphi(d), \epsilon} \circ \ldots \circ \bPsi_{\varphi(1), \epsilon} \right)^{-1}
	\end{equation}
	establishing the reversibility of $\bPsi_{\varphi(d), \epsilon} \circ \ldots \circ \bPsi_{\varphi(1), \epsilon}$ in distribution. 
\end{proof}

\begin{proof}[Theorem~\ref{thm:dhmc_integrator}]
	Let $\bPsi_{J, \, \varphi, \, \epsilon} = \bPsi_{\varphi(d'), \, \epsilon} \circ \ldots \circ  \bPsi_{\varphi(1), \, \epsilon}$ where $\bPsi_{j, \epsilon}: (\btheta, \p) \to (\btheta^*, \p^*)$ is defined as in \eqref{eq:coord_update_pass} and \eqref{eq:coord_update_bounce} and $\varphi(1), \ldots, \varphi(d')$ is a permutation of the indexing set $J$.  Also define $\bPsi_{\Theta, \, I, \,\epsilon / 2}$ and $\bPsi_{P, \, I, \, \epsilon / 2}$ as a function of $(\btheta, \p)$ such that
	\begin{equation}
	\bPsi_{\Theta, \, I, \,\epsilon / 2}: \btheta_I \to \btheta_I + \dfrac{\epsilon}{2} \, \mass_I^{-1} \p_I, \quad
	\bPsi_{P, \, I, \, \epsilon / 2}: \p_I \to \p_I - \dfrac{\epsilon}{2} \nabla_{\btheta_I} \pot(\btheta)
	\end{equation}
	while leaving all the other coordinates unchanged.
	The integrator of Algorithm~\ref{alg:dhmc_integrator} can then be formally expressed as a map
	\begin{equation}
	\label{eq:dhmc_integrator_operator}
	\bPsi_{\Theta, \, I, \,\epsilon / 2} \circ \bPsi_{P, \, I, \, \epsilon / 2} \circ \bPsi_{J, \, \varphi, \, \epsilon} \circ \bPsi_{P, \, I, \, \epsilon / 2} \circ \bPsi_{\Theta, \, I, \,\epsilon / 2}.
	\end{equation}
	Being a symmetric composition of reversible maps, the map \eqref{eq:dhmc_integrator_operator} is again reversible. The maps $\bPsi_{\Theta, \, I, \,\epsilon / 2} \circ \bPsi_{P, \, I, \, \epsilon / 2}$ and $\bPsi_{P, \, I, \, \epsilon / 2} \circ \bPsi_{\Theta, \, I, \,\epsilon / 2}$ coincide with symplectic Euler schemes in the coordinate $(\btheta_I, \p_I)$ and hence a\\
	re volume preserving \citep{hairer06}. Since $\bPsi_{J, \, \varphi, \, \epsilon}$ is also volume-preserving by the results of Lemma~\ref{lem:symplecticity_coord_integrator}, the composition \eqref{eq:dhmc_integrator_operator} is volume-preserving. 
\end{proof}

\subsection{Reversibility and volume-preserving property of discontinuous dynamics under alternative kinetic energies}
In Theorem~\ref{thm:symplecticity_disc_hamiltonian} below, we establish the reversibility and volume-preserving property of discontinuous Hamiltonian dynamics with alternative kinetic energies. Theorem~\ref{thm:symplecticity_disc_hamiltonian} extends the result of \citet{afshar15} and justifies the use of the Gaussian momentum-based integrator Algorithm~\ref{alg:ed_integrator_with_gaussian_mom} in the supplement.
A \textit{solution operator} $\bPsi_t$ of a differential equation, or more generally of a differential inclusion, is a map such that $\{\btheta(t), \p(t)\} = \bPsi_t(\btheta_0, \p_0)$ is a solution of the equation with the initial condition $\{\btheta(0), \p(0)\} = (\btheta_0, \p_0)$. Also, \textit{symplecticity} is a property of Hamiltonian dynamics which implies volume-preservation. Section~\ref{sec:symplecticity_proof} below provides the definition of symplecticity as well as the proof of Theorem~\ref{thm:symplecticity_disc_hamiltonian}.

\begin{theorem}
	\label{thm:symplecticity_disc_hamiltonian}
	Let $\pot(\btheta)$ be a piecewise constant potential energy function whose discontinuity set is piecewise linear. Suppose that a kinetic energy $K(\p)$ is symmetric, convex, piecewise smooth, and satisfies the growth condition $K(\p) \to \infty$ as $\Vert \p \Vert \to \infty$. Then the solution operator $\bPsi_t$ of discontinuous Hamiltonian dynamics as defined in
    Section~\ref{sec:event_driven_approach} is symplectic and reversible except on a set of Lebesgue measure zero.
\end{theorem}

\noindent Theorem~\ref{thm:symplecticity_disc_hamiltonian} generalizes the result of \citet{afshar15} to a larger class of kinetic energies, but we believe the conclusions can be extended to an even larger class of potential and kinetic energies. Such results may prove useful in constructing alternative approaches for dealing with discontinuous targets.

\subsection{Symplecticity of discontinuous Hamiltonian dynamics}
\label{sec:symplecticity_proof}

Here we establish the \textit{symplecticity} of discontinuous Hamiltonian dynamics under the assumptions of Theorem~\ref{thm:symplecticity_disc_hamiltonian}.
Symplecticity implies a volume preservation and further has important consequences in the stability of numerical approximation schemes \citep{hairer06}.

\begin{definition}
	\normalfont
	A differentiable map $(\btheta, \p) \to (\btheta^*, \p^*)$ is called \textit{symplectic} if
	\begin{equation}
	\label{eq:symplecticity}
	\frac{\partial (\btheta^*, \p^*)}{\partial (\btheta, \p)}^T \bm{J} \, \frac{\partial (\btheta^*, \p^*)}{\partial (\btheta, \p)} = \bm{J}
	\quad \text{ for } 
	\bm{J} = \begin{bmatrix}
	0 & \I_d \\
	- \I_d & 0
	\end{bmatrix},
	\end{equation}
	where $\I_d$ denotes a $d$-dimensional identity matrix. A dynamics is called symplectic if its solution operator is.
\end{definition}

\begin{proof}[of Theorem~\ref{thm:symplecticity_disc_hamiltonian}]
	Reversibility is a standard property of smooth Hamiltonian dynamics with a symmetric kinetic energy \citep{hairer06}. Defined as a point-wise limit of smooth dynamics, discontinuous dynamics therefore is also reversible.
	
	We turn to the proof of symplecticity. Under the assumption of Theorem~\ref{thm:symplecticity_disc_hamiltonian}, the evolution of discontinuous Hamiltonian dynamics from a state $(\btheta, \p)$ at $t = 0$ to $(\btheta^*, \p^*)$ at $t = \tau$ is given as follows. Dividing up the time intervals into a smaller pieces if necessary, we can without loss of generality assume that a trajectory $\left\{\btheta(t), \p(t) \right\}$ encounters only one discontinuity at $\btheta(t_e)$ during the interval $[0, \tau]$. Since $\pot(\btheta)$ is piecewise constant, the momentum remains constant and $\btheta(t)$ travels in a straight line except when hitting the discontinuity. The relationship between $(\btheta, \p)$ and $(\btheta^*, \p^*)$ is therefore given by
	\begin{equation}
	\label{eq:disc_dynamics_map}
	\begin{aligned}
	\btheta^* &= \btheta + t_e \nabla_{\p} K(\p) + (\tau - t_e) \nabla_{\p} K(\p^*) \\
	\p^* &= \p + \gamma(\p) \bnu_e
	\end{aligned}
	\end{equation}
	where $\gamma(\p)$ is defined implicitly as a solution of the following relations. If $\Delta \pot_e$ defined as in \eqref{eq:energy_diff} satisfies $\Delta \pot_e < K(\p) - \min_c K(\p - c \bnu_e)$, we define $\gamma(\p)$ as a solution of
	\begin{equation}
	\label{eq:implicit_gamma_definition_pass}
	K(\p - \gamma \bnu_e) = K(\p) + \Delta U_e  \quad \text{ with } \gamma > 0.
	\end{equation}
	Otherwise, $\gamma(\p)$ is defined through the relation:
	\begin{equation}
	\label{eq:implicit_gamma_definition_bounce}
	K(\p - \gamma \bnu_e) = K(\p) \quad \text{ with } \gamma > 0.
	\end{equation}
	The uniqueness of solutions to the above relations is guaranteed by the convexity and growth condition on $K(\p)$, and hence $\gamma(\p)$ is well-defined. The event time $t_e$ is also a function of $(\btheta, \p)$ and can easily be shown to be
	\begin{equation}
	\label{eq:event_time}
	t_e(\btheta, \p) = \frac{\alpha - \langle \btheta, \bnu_e \rangle}{\langle \nabla_{\p} K(\p), \bnu_e \rangle}, 
	\end{equation}
	where $\alpha$ is the distance from the origin of the discontinuity plane of $\pot$ at $\btheta(t_e)$. Assuming that $\btheta(t_e)$ is not at the intersection of the linear discontinuity planes and that $\Delta \pot_e \neq K(\p) - \min_c K(\p - c \bnu_e)$, the relation \eqref{eq:disc_dynamics_map} correctly describes the evolution of the dynamics on a neighborhood of $(\btheta, \p)$ with $\gamma(\p)$ defined either through \eqref{eq:implicit_gamma_definition_pass} or \eqref{eq:implicit_gamma_definition_bounce}. The map $(\btheta, \p) \to (\btheta^*, \p^*)$ therefore is differentiable and Lemma~\ref{lem:simplecticity_calculation} establishes the symplecticity through direct computation.
	
	Lastly, we turn to the almost everywhere differentiability of discontinuous Hamiltonian dynamics. To characterize where the solution operator fails to be differentiable, we first define the following sets:
	\begin{equation*}
	\begin{aligned}
	\mathcal{D} 
		&= \left\{\, \btheta : \text{ multiple discontinuity boundaries of $U$ intersects at $\btheta$} \right\}; \\
	\mathcal{U} 
		&= \left\{\, \Delta > 0 : \Delta = \pot(\btheta) - \pot(\btheta') \text{ for some } \btheta, \btheta' \right\}; \\
	\mathcal{V} 
		&= \left\{\, \bnu : \text{ $\bnu$ is orthonormal to a discontinuity boundary of $U$} \right\}.
	\end{aligned}
	\end{equation*}
	The above sets are all countable by our assumption on $\pot(\btheta)$. 
	Based on the behavior of a trajectory as described in the previous paragraph, a trajectory from the initial state $(\btheta_0, \p_0)$ potentially experiences a non-differentiable behavior at time $t$ only if the initial state belongs to one of the sets below:
	\begin{equation}
	\begin{aligned}
	\bigcup\limits_{\btheta \in \mathcal{D}} \left\{ (\btheta + s \nabla_{\p} K(\p), \p) : s \in \mathbb{R} \right\}, & \
	\bigcup\limits_{\Delta \in \mathcal{U}, \bnu \in \mathcal{V}}  \left\{ (\btheta, \p) : K(\p) - \min_c K(\p - c \bnu) = \Delta \right\} \\
	& \hspace{-5em} \left\{(\btheta, \p) : t = \frac{\alpha - \langle \btheta, \bnu_e \rangle}{\langle \nabla_{\p} K(\p), \bnu_e \rangle} \right\}.
	\end{aligned}
	\end{equation}
	Being a countable union of lower dimensional manifolds, the sets above all have measure zero. 
\end{proof}

\begin{lemma}
	\label{lem:simplecticity_calculation}
	The map \eqref{eq:disc_dynamics_map} is symplectic for $\gamma(\p)$ and $t_e(\btheta, \p)$ as defined through \eqref{eq:implicit_gamma_definition_pass}, \eqref{eq:implicit_gamma_definition_bounce}, and \eqref{eq:event_time}.
\end{lemma}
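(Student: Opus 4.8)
The plan is to verify the defining identity of symplecticity \eqref{eq:symplecticity} directly on the Jacobian of the map \eqref{eq:disc_dynamics_map}, as the preceding sentence anticipates. Conceptually the map is a free flight of duration $t_e$, a momentum jump at the discontinuity, and a second free flight of duration $\tau - t_e$; each piece is a Hamiltonian flow of the (constant-$U$) system, but because the switching time $t_e$ depends on the initial state the intermediate maps are not diffeomorphisms of the full phase space, so I would not try to factor the map into symplectic pieces and instead compute with the Jacobian. Throughout write $\bnu = \bnu_e$, $\bv = \nabla_\p K(\p)$, $\bv^* = \nabla_\p K(\p^*)$, $b = \langle \bv, \bnu\rangle$, $b^* = \langle\bv^*,\bnu\rangle$, and let $\bm{H} = \nabla_\p^2 K(\p)$ and $\bm{H}^* = \nabla_\p^2 K(\p^*)$; recall from the local reduction in the proof of Theorem~\ref{thm:symplecticity_disc_hamiltonian} that $\bnu$, $\alpha$, and $\Delta U_e$ are constant on the relevant neighborhood, so both $\gamma$ and $\p^*$ depend only on $\p$.

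The first observation is that $\partial\p^*/\partial\btheta = \bm{0}$, so the Jacobian is block upper triangular,
\[
\frac{\partial(\btheta^*,\p^*)}{\partial(\btheta,\p)} = \begin{bmatrix} A & B \\ \bm{0} & D\end{bmatrix}, \qquad A = \frac{\partial\btheta^*}{\partial\btheta}, \quad B = \frac{\partial\btheta^*}{\partial\p}, \quad D = \frac{\partial\p^*}{\partial\p}.
\]
For such a matrix \eqref{eq:symplecticity} is equivalent to the two conditions $A^\intercal D = \I$ and $D^\intercal B$ symmetric, which I would isolate first. Differentiating $\p^* = \p + \gamma(\p)\bnu$ gives $D = \I + \bnu\,(\nabla_\p\gamma)^\intercal$, and differentiating the event time \eqref{eq:event_time} gives $\nabla_\btheta t_e = -\bnu/b$, whence $A = \I - b^{-1}(\bv - \bv^*)\bnu^\intercal$. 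The crucial input is implicit differentiation of the relation defining $\gamma$, i.e.\ \eqref{eq:implicit_gamma_definition_pass} or \eqref{eq:implicit_gamma_definition_bounce}: applying $\nabla_\p$ to $K(\p^*) = K(\p) + \mathrm{const}$ yields $\bv^* + b^*\nabla_\p\gamma = \bv$, that is $\nabla_\p\gamma = (\bv - \bv^*)/b^*$ (the same formula in the ``pass'' and ``bounce'' cases, since the constant drops out). Substituting this into $D^\intercal A$ and collecting the rank-one terms along $\bnu^\intercal$ makes them cancel, giving $D^\intercal A = \I$ and hence the first condition.

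For the second condition I would compute $B$ using $\nabla_\p t_e = -(t_e/b)\,\bm{H}\bnu$ (again from \eqref{eq:event_time}) and the chain rule $\partial\bv^*/\partial\p = \bm{H}^* D$, obtaining $B = -\tfrac{t_e}{b}(\bv - \bv^*)\bnu^\intercal\bm{H} + t_e\bm{H} + (\tau - t_e)\bm{H}^* D$. The key algebraic fact is that $\bv - \bv^*$ is an eigenvector of $D^\intercal = \I + (\nabla_\p\gamma)\bnu^\intercal$, namely $D^\intercal(\bv - \bv^*) = (b/b^*)(\bv - \bv^*)$, which follows directly from $\nabla_\p\gamma = (\bv-\bv^*)/b^*$. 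Using this together with $\nabla_\p\gamma = (\bv-\bv^*)/b^*$ once more, the two terms of $D^\intercal B$ carrying the factor $\bnu^\intercal\bm{H}$ cancel exactly, leaving $D^\intercal B = t_e\,\bm{H} + (\tau - t_e)\,D^\intercal\bm{H}^* D$. Both summands are symmetric --- $\bm{H}$ because it is a Hessian, and $D^\intercal\bm{H}^* D$ because it has the congruence form $M^\intercal S M$ with $S = \bm{H}^*$ symmetric --- so $D^\intercal B$ is symmetric and symplecticity follows.

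I expect the only real obstacle to be the bookkeeping in the second condition: the cancellation of the $\bnu^\intercal\bm{H}$ terms is the one place where the precise definition of $\gamma$ is indispensable, and it is worth double-checking the eigenvector identity and the sign conventions in \eqref{eq:disc_dynamics_map}. The computation also tacitly requires transversality, $b = \langle\nabla_\p K(\p),\bnu\rangle \neq 0$, and $b^* \neq 0$, both of which hold wherever the trajectory genuinely crosses (or reflects off) the discontinuity and $\gamma$ is well-defined, as guaranteed by the convexity and growth hypotheses on $K$; these are exactly the states excluded by the measure-zero sets in the proof of Theorem~\ref{thm:symplecticity_disc_hamiltonian}.
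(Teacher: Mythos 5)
Your proposal is correct and follows essentially the same route as the paper's proof: you compute the block upper-triangular Jacobian of \eqref{eq:disc_dynamics_map}, obtain $\nabla_\p \gamma = (\bv - \bv^*)/\langle \bv^*, \bnu\rangle$ by implicit differentiation of \eqref{eq:implicit_gamma_definition_pass}--\eqref{eq:implicit_gamma_definition_bounce}, reduce symplecticity to the two conditions $A^\intercal D = \I$ and $D^\intercal B$ symmetric, and verify both via the same rank-one cancellations, arriving at $D^\intercal B = t_e \bm{H} + (\tau - t_e) D^\intercal \bm{H}^* D$ exactly as in the paper. Your explicit eigenvector identity $D^\intercal(\bv - \bv^*) = (b/b^*)(\bv - \bv^*)$ and the remarks on transversality ($b, b^* \neq 0$) and on the sign convention in \eqref{eq:disc_dynamics_map} are welcome clarifications of steps the paper leaves implicit, but they do not change the argument.
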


\begin{proof} 
	To simplify expressions, we denote $\bw = \nabla_{\p} K(\p)$, $\bw^* = \nabla_{\p} K(\p^*)$, and let $\hess$ and $\hess^*$ denote the Hessians of $K$ at $\p$ and $\p^*$. First, an implicit differentiation of either \eqref{eq:implicit_gamma_definition_pass} or \eqref{eq:implicit_gamma_definition_bounce} with some algebra yields
	\begin{equation}
	\frac{\partial \gamma}{\partial \p} = \frac{\bw^\intercal - \bw^{* \intercal}}{\langle \bw^*, \bnu \rangle}.
	\end{equation}
	Differentiating \eqref{eq:disc_dynamics_map} with respect to $(\btheta, \p)$, we obtain
	\begin{equation}
	\label{eq:disc_dynamics_deriv}
	\begin{aligned}
	\frac{\partial \btheta^*}{\partial \btheta} 
	&= \I - \frac{(\bw - \bw^*) \bnu_e^\intercal}{ \langle \bw, \bnu_e \rangle }, & \
	\frac{\partial \btheta^*}{\partial \p} 
	&= t_e \hess - \frac{t_e}{ \langle w, \bnu_e \rangle } (\bw - \bw^*) \bnu_e^\intercal  \hess + (\tau - t_e) \hess^* \frac{\partial \p^*}{\partial \p} \\
	\frac{\partial \p^*}{\partial \btheta}
	&= 0, & \   
	\frac{\partial \p^*}{\partial \p}  
	&= \I + \frac{\bnu_e (\bw - \bw^*)^\intercal}{\langle \bw^*, \bnu_e \rangle}.
	\end{aligned}
	\end{equation}
	When $\partial \p^* / \partial \btheta = \bm{0}$, the symplecticity condition \eqref{eq:symplecticity} simplifies to:
	\begin{equation}
	\label{eq:simplified_symplecticity}
	\frac{\partial \btheta^*}{\partial \btheta}^\intercal \frac{\partial \p^*}{\partial \p}  = \I, \quad
	\frac{\partial \p^*}{\partial \p}^\intercal \frac{\partial \btheta^*}{\partial \p} = \left( \frac{\partial \p^*}{\partial \p}^\intercal \frac{\partial \btheta^*}{\partial \p} \right)^\intercal.
	\end{equation}
	The first equality in \eqref{eq:simplified_symplecticity} is easily verified from \eqref{eq:disc_dynamics_deriv}. To establish the second equality of \eqref{eq:simplified_symplecticity}, we need to verify the symmetry of the matrix
	\begin{equation}
	\label{eq:symmetry_to_be_proven}
	\frac{\partial \p^*}{\partial \p}^\intercal \frac{\partial \btheta^*}{\partial \p} 
	= t_e \frac{\partial \p^*}{\partial \p}^\intercal \left\{ \I - \frac{(\bw - \bw^*) \bnu_e^\intercal}{ \langle \bw, \bnu_e \rangle } \right\} \hess + (\tau - t_e) \frac{\partial \p^*}{\partial \p}^\intercal  \hess^* \frac{\partial \p^*}{\partial \p}.
	\end{equation}
	The first term of \eqref{eq:symmetry_to_be_proven} simplifies to $t_e \hess$, which is symmetric, and the second term is obviously symmetric. 
\end{proof}

\subsection{Connections between zig-zag process and Laplace momentum-based Hamiltonian dynamics}
\label{sec:relation_to_zigzag_details}
The zig-zag sampler is a state-of-the-art non-reversible Monte Carlo algorithm based on a piece-wise deterministic Markov process called a \textit{zig-zag process} \citep{bierkens2016zigzag, fearnhead2016piecewisemarkov, bierkens2017piecewisemarkov}. 
Here we describe a remarkable similarity between a zig-zag process and the Laplace momentum-based Hamiltonian dynamics with unit mass $m_j = 1$. 

As described in Section~\ref{sec:dynamics_with_laplace_mom} of the main manuscript, this Hamiltonian dynamics is governed by the following differential equation:
\begin{equation}
\frac{\diff \btheta}{\diff t}
= \textrm{sign}(\p), \quad 
\frac{\diff \p}{\diff t}
= - \nabla_{\btheta} \pot(\btheta).
\end{equation}
Consider a zig-zag process and Hamiltonian dynamics both starting from the state $\btheta_0$. Let $\bm{v}_0$  drawn uniformly drawn from $\{-1, +1\}^d$ be the initial velocity of the zig-zag process and $\p_0 = (p_{0,1}, \ldots, p_{0,d})$ drawn from the independent Laplace distribution be the initial momentum of the Hamiltonian dynamics. Under both the zig-zag process and Hamiltonian dynamics, the velocities remain constant while the parameter $\btheta$ moves along a straight line $\btheta^Z(t) = \btheta_0 + t \bv_0$ and $\btheta^H(t) = \btheta_0 + t \, \textrm{sign}(\p_0) $ for $t > 0$ until their respective first event times. The first event time for the zig-zag process is given as $t_{e}^Z = \min\{t_1^Z, \ldots, t_d^Z \}$ where
\begin{equation}
\label{eq:zigzag_event_time}
t_i^Z = \inf_{t' > 0} \left\{ \tau_i = \int_{0}^{t'} \left[v_{0,i} \partial_{\theta_i} U (\btheta_0 + t \bv_0) \right]^+ \diff t' \right\} 
\end{equation}
with $[x]^+ = \max\{0, x\}$  and $\tau_i$'s drawn from $\mathrm{Exp}(1)$. For the Hamiltonian dynamics, the first event time is given as $t_e^H = \min\{t_1^H, \ldots, t_d^H\}$ where
\begin{equation}
\label{eq:hamilton_event_time}
t_i^H = \inf_{t' > 0} \left[ |p_{0,i}| = \int_{0}^{t'} \textrm{sign}(p_{0,i}) \, \partial_{\theta_i} U \!\left\{\btheta_0 + t \, \textrm{sign}(\p_0) \right\} \diff t' \right]
\end{equation}
For both processes, the events result in the velocity change $v_{k} \gets - v_k$ and $\textrm{sign}(p_\ell) \gets - \textrm{sign}(p_\ell)$ for $k = \textrm{argmin}_i \{t_i^Z\}$ and $\ell = \textrm{argmin}_i \{t_i^H\}$.

Given that $(\bv_0, \bm{\tau}) \overset{d}{=} \{\textrm{sign}(\p_0), |\p_0|\}$, the similarity between \eqref{eq:zigzag_event_time} and \eqref{eq:hamilton_event_time} is striking. In fact, if $U(\btheta)$ were convex and $\btheta_0$ was the minimum of $U(\btheta)$, then the two processes $\{\btheta^Z(t), 0 \leq t \leq t_e^Z\}$ and $\{\btheta^H(t), 0 \leq t \leq t_e^H \}$ coincide in distribution. After the first event time or in  more general settings, however, the two processes diverge because a zig-zag process $(\btheta^Z, \diff \btheta^Z \hspace{-.5ex}/ \diff t) = (\btheta^Z, \bv)$ is Markovian while its Hamiltonian dynamics counter-part $(\btheta^H, \diff \btheta^H \hspace{-.5ex}/ \diff t) = \{\btheta^H, \textrm{sign}(\p)\}$ is not. 
More precisely, Hamiltonian dynamics after each event retains the magnitudes of its momentum $|p_i|$'s from the previous moment, so that the future evolution of $\{\btheta^H, \textrm{sign}(\p)\}$ cannot be determined only from its current value without the magnitude information.
Also, Hamiltonian dynamics accumulates kinetic energy while potential energy goes downhill such that $\textrm{sign}\!\left\{p_{i}(t) \right\} \partial_{\theta_i} U\{\btheta^H(t)\} < 0$. This creates a tendency for each coordinate of a Hamiltonian dynamics trajectory $\btheta^H(t)$ to travel longer in the same direction before switching its direction compared to that of a zig-zag process.

Its close connection to a state-of-the-art sampler partially explains the empirical success of \dhmc{} in Section~\ref{sec:dhmc_vs_gibbs}, though the application of \dhmc{} to smooth target distributions is outside the main focus of this paper. Some potential advantages of the zig-zag sampler include its non-reversibility and the fact that its entire trajectory can be used as valid samples from the target. In fact, \dhmc{} can also be made non-reversible through partial momentum refreshments \citep{neal10} and can utilize the entire trajectories as valid samples \citep{nishimura15recycle}. These strategies will likely further boost the performance of \dhmc{}. 

\vspace{-.75\baselineskip}
\section{Additional details on Jolly-Seber model}
\label{sec:jolly_seber_details}

\newcommand{\tilB}{\tilde{B}}
\newcommand{\tilphi}{\tilde{\phi}}
\newcommand{\tilp}{\tilde{p}}

\subsection{Sufficient statistics and likelihood function}
Under appropriate assumptions, details of which we refer the reader to \cite{seber82}, the likelihood of the Jolly-Seber model depends only on the following statistics from a capture-recapture experiment carried over $i = 1, \ldots, T$ capture occasions:
\begin{equation*}
\begin{aligned}
R_i &= \, \text{number of marked animals released after the $i$th capture occasion;} \\
r_i &= \, \text{number of animals from the released $R_i$ animals that are subsequently captured;} \\
z_i &= \, \text{number of animals that are caught before $i$th capture occasion,} \\
	&\hspace*{2.5em} \text{not caught in the $i$th capture occasion, but caught subsequently;} \\
m_i &= \, \text{number of marked animals caught at the $i$th capture occasion;} \\
u_i &= \, \text{number of unmarked animals caught at the $i$th capture occasion.}
\end{aligned}
\end{equation*}
The likelihood decomposes into two parts: one for the first captures of previously unmarked animals and another for their re-captures.  More precisely,
\begin{equation}
\begin{aligned}
L(\text{data} \given \bm{U}, \bm{p}, \bm{\phi})
&= L(\text{first captures}) \times L(\text{re-captures})  \\
L(\text{first captures})
&\propto \prod_{i=1}^{T} \frac{U_i!}{U_i - u_i !} p_i^{u_i} (1 - p_i)^{U_i - u_i} \\
L(\text{re-captures}) 
&\propto \prod_{i=1}^{T-1} \chi_i^{R_i - r_i} \{\phi_i (1 - p_{i+1})\}^{z_{i+1}} (\phi_i p_{i+1})^{m_{i+1}} \\
\end{aligned}
\end{equation}
where $\chi_i$ represents the conditional probability that a marked animal released after the $i$th capture occasion is not caught again. Mathematically, $\chi_i$ is defined recursively as 
\begin{equation}
\begin{aligned}
\chi_{T-1} &= 1 - \phi_{T-1} p_T, \quad
\chi_i = 1 - \phi_i \{p_{i+1} + (1 - p_{i+1})(1 - \chi_{i+1})\}.
\end{aligned}
\end{equation}

\subsection{Prior distribution for $U_{i+1} \given U_i, \phi_i$}
\label{sec:js_unmarked_prior_conditionals}
Let $B_i$ denote the number of ``births,'' representing animals that are born, enter (immigration), or leave (emigration) the population after the $i$th occasion and remain so until the $(i+1)$th occasion.  Also let $S_i$ denote the number of animals that are unmarked right after the $i$th capture occasion and survive until the next capture occasion. Then we have $U_{i+1} = B_i + S_i$ where $S_i \given U_i, u_i, \phi_i
\sim \textrm{Binomial}(\phi_i, U_i - u_i)$.

The prior distribution of $\{U_i\}_{i=1}^{T}$ can thus be induced by assigning a prior on  $B_i$'s. In our example, we assign a convenient prior on $U_i$'s based on the assumptions that 1) $\textrm{Binomial}(\phi_i, U_i - u_i)$ can be approximated by $\normal\{\phi_i (U_i - u_i), \phi_i (1 - \phi_i)\}$ and 2) $B_i$'s approximately follows independent $\normal(0, \sigma_B^2)$ . These assumptions motivate a prior
\begin{equation}
U_{i+1} \given U_i, u_i, \phi_i, \sigma_B
\sim \left\lfloor \normal\{\phi_i (U_i - u_i), \sigma_B^2 + \phi_i (1 - \phi_i)\} \right\rfloor,
\end{equation}
where $\lfloor \cdot \rfloor$ is a floor function. We used $\sigma_B = 500$ in our example of Section~\ref{sec:jolly_seber_example} in the main manuscript. An alternative prior on $\{U_i\}_{i=1}^{T}$ can be assigned to reflect different model and prior assumptions on the number of births. For instance, it is more natural to constrain $B_i \geq 0$ in some cases \citep{schwarz96} and a binomial distribution on $B_i$ will for example induce a Poisson-binomial distribution on the conditional $U_{i + 1} \given U_i, u_i, \phi_i$ after marginalizing over $B_i$ and $S_i$.

\subsection{Inference on unknown population sizes}
\label{sec:js_function_of_unknown_params}
In case the total population sizes $\{N_i\}_{i=1}^T$ at each capture occasion are of interest, we can generate their posterior samples using the relation $N_i = M_i + U_i$ where $M_i$ denotes the number of marked animals right before the $(i+1)$th capture occasion. The distribution of $\{M_i\}_{i=1}^T$ follows $M_0 = 0$ and $M_{i+1} \given M_i, \phi_i \sim \mathrm{Binomial}(M_i, \phi_i)$.

\section{Additional numerical results}
\label{sec:additional_numerical_results}

\subsection{Comparison of \dhmc{} and Gibbs in synthetic example}
\label{sec:dhmc_vs_gibbs}
We use a synthetic target distribution to demonstrate the difference between \linebreak Metropolis-within-Gibbs with and without momentum as discussed in the main manuscript Section~\ref{sec:relation_to_metropolis_gibbs}. While \dhmc{} requires neither conjugacy or smoothness of the conditional densities, we choose a multivariate Gaussian target distribution so that we can compare \dhmc{} to an optimal Metropolis-within-Gibbs implementation with the univariate proposal variances chosen according to the theory of \cite{gelman96}. In particular, we assume that the target distribution of $\btheta$ follows that of a stationary unit variance auto-regressive process of the form
\begin{equation}
\theta_t = \alpha \theta_{t-1} + \sqrt{1 - \alpha^2} \eta_t, \quad \theta_1, \eta_t \sim \normal(0, 1)
\end{equation}
for $t = 2, \ldots, 1000$ with $\alpha = 0.9$.

We compare the performances of four algorithms: \dhmc{} (coordinate-wise), Gibbs (full conditional updates), Metropolis-within-Gibbs (univariate updates with optimal proposal variances), and the no-U-turn sampler of \citet{hoffman14}. The performance of each algorithm is summarized in Table~\ref{tab:ar_performance_summary}. 
Remarkably, \dhmc{} outperforms not only Metropolis-within-Gibbs but also Gibbs, despite requiring no closed-form conditionals at all. After accounting for the computational costs, \dhmc{} improves Gibbs by over 50\% and Metropolis-within-Gibbs by over 600\%. In general, the  advantage of \dhmc{} over Gibbs is expected to increase as the correlations among the parameters increase because the use of momentum can suppress the ``random walk behavior'' \citep{neal10}. The covariance matrix of the target distribution here has a condition number $\approx 19^2$, which corresponds to substantial but not particularly severe correlations.

In computing \ess{} per unit time, we estimated theoretical and platform-independent relative computational time of the algorithms as follows. 
In reasonable low-level language implementations, the computation of conditional densities should account for the majority of computational times for a typical target distribution. Therefore, computational efforts should be roughly equivalent between one numerical integration step of \dhmc{} and one iteration of the Metropolis-within-Gibbs sampler. The computational cost of the \nuts{} and Gibbs relative to these algorithms is more specific to individual target distributions, depending strongly on specific structures such as conditional independence among the parameters. For this reason, we do not attempt to compare the \nuts{} and Gibbs to the other algorithms in terms of \ess{} per unit time.

\begin{table}
	\caption{Performance summary of each algorithm on the auto-regressive process example. ``DHMC'' and ``ESS'' in the table stands for \dhmc{} and \ess{}. The term $(\pm \ldots)$ indicates the error estimate of our \ess{} estimators. 
	Path length is averaged over each iteration. ``Iter time'' shows the computational time for one iteration of each algorithm relative to the fastest one.}
	\label{tab:ar_performance_summary}
	\centering
	\resizebox{\columnwidth}{!}{
		\begin{tabular}{c|c|c|c|c}
			& ESS per 100 samples & ESS per unit time & Path length & Iter time \\ 
			\hline
			DHMC & 77.4 ($\pm$ 5.2) & 7.12 & 49.5 & 49.5 \\ 
			\hline
			No-U-turn & 52.4 ($\pm$ 3.2) & N/A & 142 & N/A \\ 
			\hline
			Gibbs & 0.949 ($\pm$ 0.076) & 4.33 & N/A & N/A \\ 
			\hline
			Metropolis-within-Gibbs & 0.219 ($\pm$ 0.015) & 1 & N/A & 1
		\end{tabular}
	}
\end{table}

\subsection{Multiple change-point detection for auto-regressive	conditional heteroscedastic processes}
\label{sec:change_point_detection_example}
Auto-regressive conditional heteroscedastic processes are popular models for log-returns of speculative prices such as stock market indices. A non-stationary first-order auto-regressive conditional heteroscedastic process $\{y_t\}_{t=1}^T$ with parameters $\{a(t), b(t)\}_{t=1}^T$ assumes the distribution
\begin{equation}
y_t \given y_{t-1}, a, b
\sim \normal(0, \sigma_t^2) \quad \text{ where } \ \sigma_t^2 = a(t) + b(t) \, y_{t-1}^2.
\end{equation}
Motivated by its interpretability and advantage in forecasting, \cite{fryzlewicz14} propose a piecewise constant parametrization of $a(t)$ and $b(t)$ as follows:
\begin{equation}
\label{eq:change_point_vol_parametrization}
\big\{a(t), b(t) \big\} = (a_k, b_k) \ \text{ if } \ \tau_{k - 1} < t \leq \tau_{k}
\end{equation}
for $k = 1, \ldots, K$, where the number of change points $K$ and their locations $1 = \tau_0 < \tau_1 < \ldots < \tau_K$ are to be estimated along with $(a_k, b_k)$'s.

To fit the above model within a Bayesian paradigm, we infer the change points  through a variable selection type approach as follows, using the horseshoe shrinkage priors of \cite{carvalho10}. We first choose an upper bound $K_{\max}$ on the number of change points and assume a uniform prior on $\tau_k$'s on the constrained space $1 < \tau_1 < \ldots < \tau_{K_{\max}} < T$.  We then model the changes in the values of $a(t)$ and $b(t)$ through a prior
\begin{equation}
\begin{array}{c}
\log(a_k / a_{k-1}) \sim \normal\left(0, \sigma_a \eta_{a,k}\right) \\
\log(b_k / b_{k-1}) \sim \normal\left(0, \sigma_b \eta_{b,k}\right)
\end{array} 
\quad \text{ with } \  \eta_{a, k}, \eta_{b, k} \sim \textrm{Cauchy}^+(0, 1),
\end{equation}
where $\textrm{Cauchy}^+(0, 1)$ denotes the standard half-Cauchy prior and $\sigma_a$ and $\sigma_b$ are the global shrinkage parameters \citep{carvalho10}. The above approach can  ``select'' a subset of $\tau_1, \ldots, \tau_{K_{\max}}$ as real change points by removing the others through shrinkage $a_k \approx a_{k-1}$ and $b_k \approx b_{k-1}$.  We place a default prior $\sigma_a, \sigma_b \sim \textrm{Cauchy}^+(0, 1)$ for the global shrinkage parameters \citep{gelman06}, and $a_0, b_0 \sim \normal(0, 1)$ for the initial volatility parameters.

Following \cite{fryzlewicz14}, we fit our model to the log-return values of a stock market index over a period that includes the subprime mortgage crisis. In particular, we use the daily closing values of S\&P 500 on the market opening days during the period from Jan 1st, 2005 to Dec 31st, 2009. The log-return value cannot be computed when a daily closing value exactly coincides with the previous one; there were four such days during the period and these data points were removed. The model parameters in this example are largely nonidentifiable even with the order constraint $\tau_1, \ldots, \tau_{K_{\max}}$. In such cases, it is not clear if the minimum \ess{} across the individual parameters is a good measure of efficiency. For this example, therefore, we calculate the minimum \ess{} over the first and second moments of the following quantities: the hyper-parameters $\sigma_a$ and $\sigma_b$, log posterior density, and four summary statistics of the estimated functions $a(t)$ and $b(t)$. The four summary statistics $\log(\| a \|_2), \log(\| b \|_2), C_a,$ and $C_b$ are defined as follows. The quantity $\| a \|_2$ summarizes the deviation of $a(t)$ from its posterior (pointwise empirical) mean $\hat{a}(t)$ and is defined as
	$
	\| a \|_2 = \sum_{t = 1}^T | a(t) - \hat{a}(t) |^2 
	$.
	The statistic $C_a$ is a surrogate for the number of ``change points'' in the function $a(t)$:
	\begin{equation}
	C_a = \big| \left\{ k \in \{1, \ldots, K_{\max} \} : | \log(a_k / a_{k-1}) | > .1  \right\} \big|.
	\end{equation}
	The statistics $\| b \|_2 \text{ and } C_b$ are defined analogously.

Table~\ref{tab:change_points_performance_summary} summarizes the simulation results; each algorithm is run for $2.5 \times 10^4$ iterations starting from stationarity.
While \nutsGibbs{} and \dhmc{} are comparable in their performances, as discussed earlier, \dhmc{} has the advantage that all the necessary computations can be automated within the framework of probabilistic programming languages. For a more useful comparison, therefore, we also implement the default sampling scheme used by PyMC. The algorithm updates each of the discrete parameter via a Metropolis step whose proposal distribution is a symmetric uniform integer-valued distribution with the variance calibrated to achieve an acceptance rate around 40\%. 

This example is challenging for \dhmc{} as the posterior of $\tau_k$'s are in general multi-modal conditionally on the continuous parameters. 
The complex dependency between the local shrinkage and the other parameters creates potential paths among the modes, however. It seems that \dhmc{} can exploit this complex posterior geometry efficiently and be competitive with \nutsGibbs{}. Figure~\ref{fig:change_points_posterior_samples} plots 100 \dhmc{} posterior samples of the piecewise constant volatility functions $a(t)$ and $b(t)$ to illustrate the posterior structure of the model.

\begin{table}
	\caption{Performance summary of each algorithm on the change points detection example. ``DHMC'' and ``ESS'' in the table stands for \dhmc{} and \ess{}. The term $(\pm \ldots)$ is the error estimate of our \ess{} estimators. Path length is averaged over each iteration. ``Iter time'' shows the computational time for one iteration of each algorithm relative to the fastest one.}
	\label{tab:change_points_performance_summary} 
	\centering
	\begin{tabular}{c|c|c|c|c}
		& ESS per 100 samples & ESS per minute & Path length & Iter time \\ 
		\hline
		DHMC & 13.7 ($\pm$ 1.1) & 38.7 & 87.3 & 1.03 \\ 
		\hline
		\nutsGibbs{} & 11.6 ($\pm$ 3.2) & 33.5 & 218 & 1 \\ 
		\hline
		No-U-turn / Metropolis & 6.04 ($\pm$ 1.2) & 17.5 & 217 & 1
	\end{tabular}
\end{table}

\begin{figure}
	\centering
	\subfigure{
		\includegraphics[width=.47\linewidth]{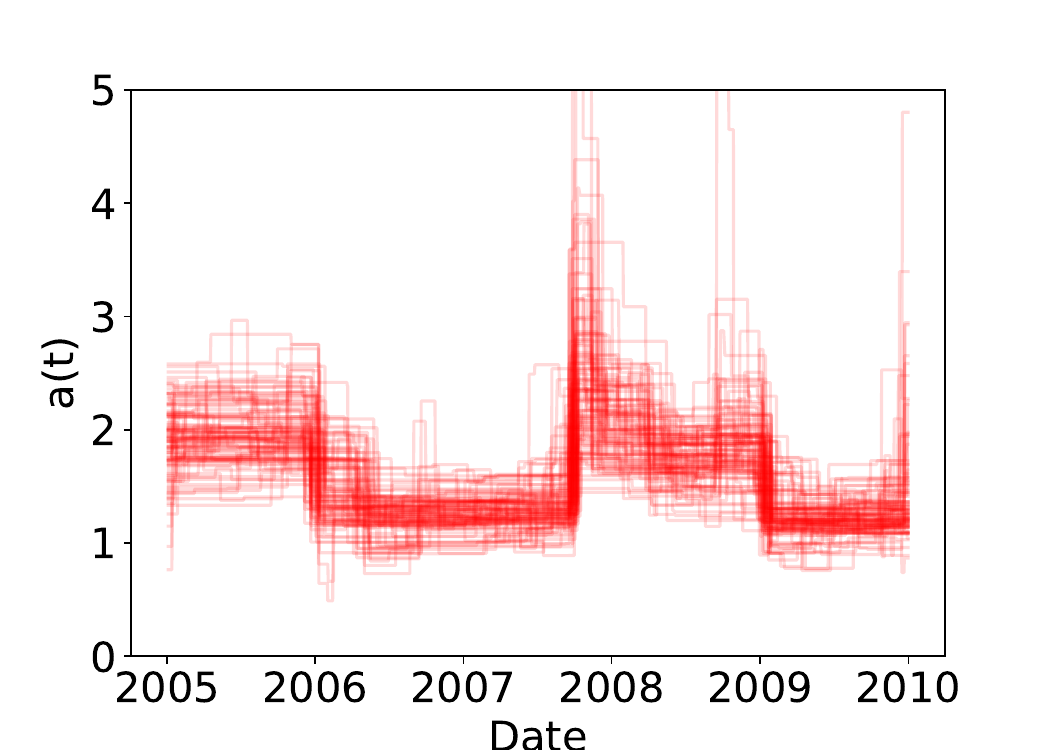} }
	\subfigure{
		\includegraphics[width=.47\linewidth]{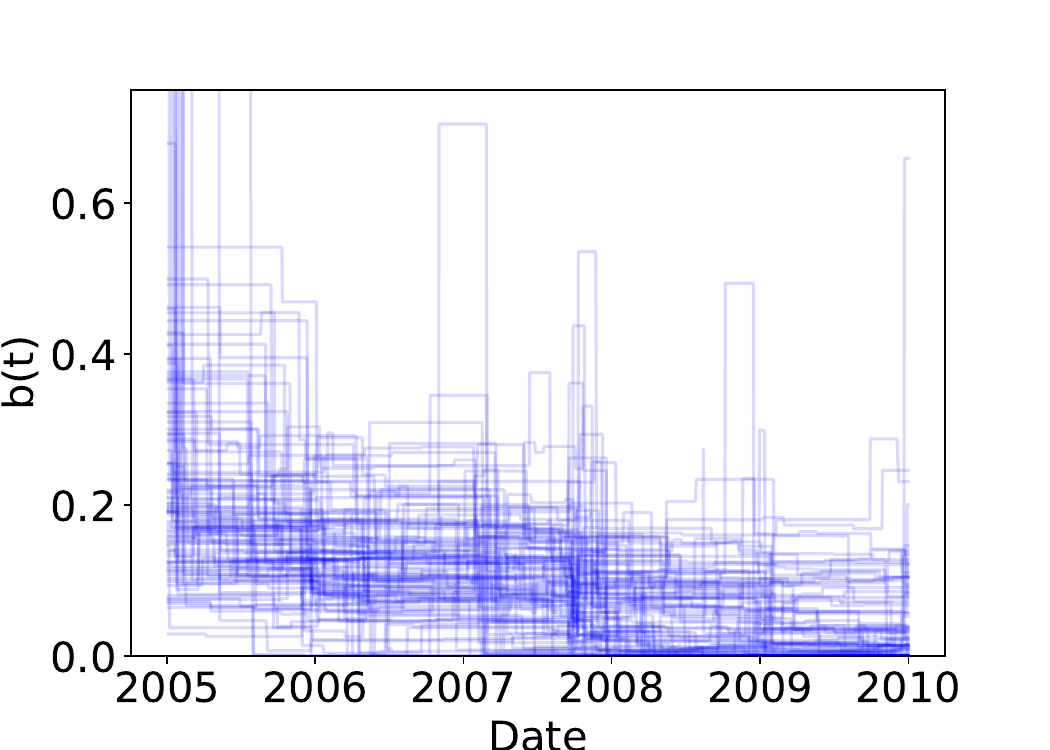} }
	\caption{Posterior samples of the piecewise constant volatility functions $a(t)$ and $b(t)$ from 100 iterations of \dhmc{}.}
	\label{fig:change_points_posterior_samples} 
\end{figure}

\section{Error analysis of \dhmc{} integrator}
\label{sec:dhmc_integrator_error_analysis}

Here we analyze the approximation error incurred by the integrator of Algorithm~\ref{alg:dhmc_integrator}. We focus on the error in Hamiltonian, the amount by which the Hamiltonian fluctuates along a numerical solution, as it determines the acceptance probability of a proposal. An error incurred by one numerical integration step $(\btheta^0, \p^0) \to (\btheta^1, \p^1)$ of stepsize $\epsilon$ is known as a \textit{local error}. Approximating the evolution $\{\btheta(0), \p(0)\} \to \{\btheta(\tau), \p(\tau)\}$ requires $L(\epsilon) = \lfloor \tau / \epsilon \rfloor$ numerical integration steps and the error incurred by the map $(\btheta^0, \p^0) \to (\btheta^L, \p^L)$ is known as a \textit{global error}. We quantify the local error of Algorithm~\ref{alg:dhmc_integrator} in Section~\ref{sec:local_error} and relate it to the global error in Section~\ref{sec:global_error}.

\subsection{Local error in Hamiltonian} 
\label{sec:local_error}

In analyzing Algorithm~\ref{alg:dhmc_integrator}, it is useful to break up the algorithm into three steps; the first (partial) update of continuous parameters, the update of discontinuous parameters, and the second update of continuous parameters. The notation $(\btheta_I^{\half}, \p_I^{\half})$ will refer to the intermediate state after the first update of continuous parameters, namely $\p_I^{\half} = \p_I^0 - \frac{\epsilon}{2} \nabla_{\btheta_I} U(\btheta_I^0, \btheta_J^0)$ and $\btheta_I^{\half} = \btheta_I^0 + \frac{\epsilon}{2} \nabla_{\p_I} K(\p_I^{\half}, \p_J^0)$ where $K(\p) = \frac{1}{2} \, \p_I^\intercal \mass_I^{-1} \p_I + \sum_{j \in J} m_j^{-1} |p_j |$ as before. The update $(\btheta_I^{0}, \p_I^{0}) \to (\btheta_I^{\half}, \p_I^{\half})$ is followed by the update $(\btheta_J^0, \p_J^0) \to (\btheta_J^1, \p_J^1)$ of discontinuous parameters, which then is followed by another continuous parameter update $(\btheta_I^{\half}, \p_I^{\half}) \to (\btheta_I^{1}, \p_I^{1})$. The exact solution is denoted by $\{\btheta(t), \p(t)\}$ with the initial condition $\{\btheta(0), \p(0)\} = (\btheta^0, \p^0)$. 

The key result in this section is Corollary~\ref{cor:local_error} below, which follows immediately from the following theorem:
\begin{theorem}
\label{thm:local_error}
	The local error in Hamiltonian incurred by Algorithm~\ref{alg:dhmc_integrator} is given by
	\begin{equation}
	\label{eq:hamiltonian_error}
	H(\btheta^1, \p^1) - H(\btheta^0, \p^0)
	= \frac{\epsilon^2}{8} \left\{ \xi\left(\btheta_I^{\half}, \btheta_J^1, \p_I^{\half}\right) - \xi\left(\btheta_I^{\half}, \btheta_J^0, \p_I^{\half}\right) \right\} + O(\epsilon^3),
	\end{equation}
	where $\xi$ is defined in terms of the Hessians $\hessi_U = \partial^2 U / \partial \btheta_I^2$ and $\hessi_K = \partial^2 K / \partial \p_I^2$ with respect to continuous parameters as 
	\begin{equation}
	\label{eq:function_for_leading_order_term}
	\begin{aligned}
	\xi(\btheta_I, \btheta_J, \p_I)
	&= \nabla_{\btheta_I}^\intercal U(\btheta_I, \btheta_J) \hessi_K(\p_I) \nabla_{\btheta_I} U(\btheta_I, \btheta_J) \\
	&\hspace{3em} - \nabla_{\p_I}^\intercal K(\p_I) \hessi_U(\btheta_I, \btheta_J) \nabla_{\p_I} K(\p_I).
	\end{aligned}
	\end{equation}
	As they are independent of $\p_J$, the derivatives of $K$ with respect to $\p_I$ are written simply as a function of $\p_I$ in the expression above.
\end{theorem}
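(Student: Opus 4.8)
The plan is to exploit the three-way splitting structure of Algorithm~\ref{alg:dhmc_integrator} together with the exact energy conservation of its middle block, so that the entire local error is produced by the two continuous half-updates. Writing the integrator as the composition $(\btheta^0,\p^0) \to (\btheta_I^{\half},\btheta_J^0,\p_I^{\half},\p_J^0) \to (\btheta_I^{\half},\btheta_J^1,\p_I^{\half},\p_J^1) \to (\btheta^1,\p^1)$, I would decompose the total change as $\Delta H = \Delta H_1 + \Delta H_2 + \Delta H_3$ over the three blocks. The first step is to argue $\Delta H_2 = 0$: during the discontinuous block the continuous coordinates $(\btheta_I,\p_I)$ are frozen, and each coordinate-wise update of Algorithm~\ref{alg:coord_integrator_with_laplace_mom} preserves the Hamiltonian exactly, since the pass branch decreases the kinetic term $m_j^{-1}|p_j|$ by precisely the increase $\Delta\pot$ in potential energy while the bounce branch alters neither. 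Because $K$ separates as $K_I(\p_I) + K_J(\p_J)$, the two surviving terms collapse to continuous-only contributions, namely $\Delta H_1 = [\pot(\btheta_I^{\half},\btheta_J^0) - \pot(\btheta_I^0,\btheta_J^0)] + [K_I(\p_I^{\half}) - K_I(\p_I^0)]$ and $\Delta H_3 = [\pot(\btheta_I^1,\btheta_J^1) - \pot(\btheta_I^{\half},\btheta_J^1)] + [K_I(\p_I^1) - K_I(\p_I^{\half})]$.

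Next I would Taylor-expand $\Delta H_1$ and $\Delta H_3$ in $\epsilon$, treating $\pot$ as a smooth function of $\btheta_I$ (which holds by the definition of the index set $I$) while evaluating all $\btheta_I$-gradients and Hessians at the fixed values $\btheta_J^0$ and $\btheta_J^1$ respectively, never differentiating in $\btheta_J$. Using $\p_I^{\half} - \p_I^0 = -\tfrac{\epsilon}{2}\nabla_{\btheta_I}\pot$ and $\btheta_I^{\half} - \btheta_I^0 = \tfrac{\epsilon}{2}\mass_I^{-1}\p_I^{\half}$, the first-order terms of $\Delta H_1$ are $\pm\tfrac{\epsilon}{2}\nabla_{\btheta_I}^\intercal\pot\,\mass_I^{-1}\p_I$ and cancel, exactly as in the leapfrog scheme; collecting the $O(\epsilon^2)$ terms and using $\mass_I^{-1} = \hessi_K$ yields $\Delta H_1 = -\tfrac{\epsilon^2}{8}\,\xi(\btheta_I^0,\btheta_J^0,\p_I^0) + O(\epsilon^3)$. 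The same computation applied to the position-then-momentum ordering of the third block, after invoking the first-order relation $\nabla_{\btheta_I}\pot(\btheta_I^1,\btheta_J^1) = \nabla_{\btheta_I}\pot(\btheta_I^{\half},\btheta_J^1) + \tfrac{\epsilon}{2}\hessi_U\,\mass_I^{-1}\p_I^{\half} + O(\epsilon^2)$ to move the gradient's evaluation point from $\btheta_I^1$ back to $\btheta_I^{\half}$, gives $\Delta H_3 = +\tfrac{\epsilon^2}{8}\,\xi(\btheta_I^{\half},\btheta_J^1,\p_I^{\half}) + O(\epsilon^3)$.

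Finally I would align the two evaluation points. Since $\xi$ depends smoothly on its continuous arguments $(\btheta_I,\p_I)$ and already appears multiplied by $\epsilon^2$, and since $\btheta_I^{\half} - \btheta_I^0 = O(\epsilon)$ and $\p_I^{\half} - \p_I^0 = O(\epsilon)$, I can replace $\xi(\btheta_I^0,\btheta_J^0,\p_I^0)$ by $\xi(\btheta_I^{\half},\btheta_J^0,\p_I^{\half})$ at the cost of an $O(\epsilon^3)$ remainder. Summing $\Delta H_1 + \Delta H_3$ then telescopes into $\tfrac{\epsilon^2}{8}\{\xi(\btheta_I^{\half},\btheta_J^1,\p_I^{\half}) - \xi(\btheta_I^{\half},\btheta_J^0,\p_I^{\half})\} + O(\epsilon^3)$, which is the claim. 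The main obstacle is bookkeeping rather than conceptual: one must track carefully that the continuous gradients in the two half-steps are evaluated at the distinct frozen values $\btheta_J^0$ and $\btheta_J^1$ (the sole place the discontinuous move enters the leading term, and the reason the local error is $O(\epsilon^2)$ rather than the $O(\epsilon^3)$ of ordinary leapfrog), and that all base-point shifts among $\btheta_I^0,\btheta_I^{\half},\btheta_I^1$ and among $\p_I^0,\p_I^{\half},\p_I^1$ are accounted to the correct order, so that both the $O(\epsilon)$ cancellation and the $O(\epsilon^2)$ identification remain legitimate.
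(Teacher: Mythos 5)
Your proposal is correct, and its skeleton --- the three-block decomposition of Algorithm~\ref{alg:dhmc_integrator} with $\Delta H_2 = 0$ by exact energy conservation of the coordinate-wise updates --- is exactly the paper's starting point. Where you genuinely diverge is in how the two continuous half-step errors are computed. The paper never expands $H$ along the numerical map directly: it introduces the exact flows of the two frozen systems (one with $(\btheta_J^0, \p_J^0)$ fixed, one with $(\btheta_J^1, \p_J^1)$ fixed), uses energy conservation of those exact flows to rewrite the local error as a difference between the numerical state and the exact state at matched times, and then feeds state-space error estimates for the two symplectic Euler schemes (its Lemma~\ref{lem:symplectic_euler_errors}) into a first-order Taylor expansion of $H$ about the exact endpoint. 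You bypass the exact flows and the auxiliary lemma entirely, Taylor-expanding $H$ along the numerical update itself; your block contributions $-\frac{\epsilon^2}{8}\xi\bigl(\btheta_I^{\half}, \btheta_J^0, \p_I^{\half}\bigr)$ and $+\frac{\epsilon^2}{8}\xi\bigl(\btheta_I^{\half}, \btheta_J^1, \p_I^{\half}\bigr)$ match the paper's intermediate results, including signs, so the two routes agree term by term. Yours is the more elementary and self-contained argument; the paper's buys reusable position/momentum error estimates for symplectic Euler half-steps and makes explicit that the energy error arises solely from the numerical-versus-exact-flow discrepancy. One caution when writing yours up: the claim that the $O(\epsilon)$ terms of $\Delta H_1$ ``cancel'' is compressed, since the kinetic-side term $-\frac{\epsilon}{2}\p_I^{0\,\intercal}\mass_I^{-1}\nabla_{\btheta_I}\pot$ and the potential-side term $+\frac{\epsilon}{2}\nabla_{\btheta_I}^\intercal\pot\,\mass_I^{-1}\p_I^{\half}$ are evaluated at different momenta, and their mismatch contributes $-\frac{\epsilon^2}{4}\nabla_{\btheta_I}^\intercal\pot\,\mass_I^{-1}\nabla_{\btheta_I}\pot$ at leading order; dropping that cross term would flip the sign of the $\nabla_{\btheta_I}^\intercal\pot\,\hessi_K\,\nabla_{\btheta_I}\pot$ part of $\xi$. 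Your stated coefficients are consistent with having tracked it, as your closing bookkeeping remark indicates, but this is the step a reader reproducing the sketch is most likely to get wrong.
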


\begin{corollary}
\label{cor:local_error}
	The local error in Hamiltonian incurred by Algorithm~\ref{alg:dhmc_integrator} is $O(\epsilon^3)$ when there is no discontinuity of $U$ along the line connecting $\btheta_J^0$ and $\btheta_J^1$. Otherwise, the local error is $O(\epsilon^2)$.
\end{corollary}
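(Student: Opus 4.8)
The plan is to take Theorem~\ref{thm:local_error} as given and simply read off the order of the leading term. The theorem already isolates the local error as
$\frac{\epsilon^2}{8}\big\{\xi(\btheta_I^{\half}, \btheta_J^1, \p_I^{\half}) - \xi(\btheta_I^{\half}, \btheta_J^0, \p_I^{\half})\big\} + O(\epsilon^3)$, so the entire question reduces to estimating the size of the difference $\xi(\btheta_I^{\half}, \btheta_J^1, \p_I^{\half}) - \xi(\btheta_I^{\half}, \btheta_J^0, \p_I^{\half})$ as a function of $\epsilon$, separately in the two cases.

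First I would record two structural facts. By its definition in \eqref{eq:function_for_leading_order_term}, the function $\xi(\btheta_I, \btheta_J, \p_I)$ depends on $\btheta_J$ only through the gradient $\nabla_{\btheta_I} U(\btheta_I, \btheta_J)$ and the Hessian $\hessi_U(\btheta_I, \btheta_J) = \partial^2 U / \partial \btheta_I^2$. Under the partition hypothesis that $U$ is smooth on each piece $\mathbb{R}^{|I|} \times \Omega_k$, these quantities are smooth functions of $\btheta_J$ on the interior of each piece but may jump by an $O(1)$ amount across the discontinuity boundaries in the $J$-directions; hence $\xi(\btheta_I^{\half}, \cdot, \p_I^{\half})$ is $C^1$ away from those boundaries and locally bounded. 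Second, the coordinate-wise integrator of Algorithm~\ref{alg:coord_integrator_with_laplace_mom} moves each component of $\btheta_J$ by either $0$ or $\pm \epsilon m_j^{-1}$, so the update path from $\btheta_J^0$ to $\btheta_J^1$ consists of at most $|J|$ axis-aligned segments of total length $O(\epsilon)$, and in particular $\btheta_J^1 - \btheta_J^0 = O(\epsilon)$ unconditionally.

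The two cases then follow at once. When no discontinuity of $U$ is crossed along the update path, $\xi(\btheta_I^{\half}, \cdot, \p_I^{\half})$ is continuously differentiable along that path, so a mean-value estimate bounds its total variation by $\sup \|\nabla_{\btheta_J} \xi\|$ times the path length, giving $\xi(\btheta_I^{\half}, \btheta_J^1, \p_I^{\half}) - \xi(\btheta_I^{\half}, \btheta_J^0, \p_I^{\half}) = O(\epsilon)$; multiplying by $\frac{\epsilon^2}{8}$ and absorbing the product into the remainder yields a total local error of $O(\epsilon^3)$. When a discontinuity is crossed, $\nabla_{\btheta_I} U$ and $\hessi_U$ jump across the boundary, so the $\xi$-difference is generically $O(1)$, the leading term is exactly of order $\frac{\epsilon^2}{8} \cdot O(1) = O(\epsilon^2)$, and this dominates the $O(\epsilon^3)$ remainder, giving a total local error of $O(\epsilon^2)$.

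The computation is routine once Theorem~\ref{thm:local_error} is granted, and the only point requiring care is the regularity bookkeeping in the smooth case. Since $\xi$ already involves the Hessian $\hessi_U$, asserting that $\xi$ is $C^1$ in $\btheta_J$ amounts to requiring continuous mixed third derivatives $\partial^3 U / \partial \btheta_I^2 \, \partial \btheta_J$ on each smooth piece, a smoothness demand that I would read explicitly into the standing assumption that $U$ is smooth on each $\mathbb{R}^{|I|} \times \Omega_k$. I would also be careful to apply the mean-value bound along the actual axis-aligned segments traced out by the sequential coordinate updates rather than along a single straight chord, which is harmless since there are only $|J|$ such segments of length $O(\epsilon)$ each.
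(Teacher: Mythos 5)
Your proposal is correct and follows essentially the same route as the paper: take Theorem~\ref{thm:local_error} as given, Taylor-expand (or mean-value bound) $\xi$ in $\btheta_J$ to get an $O(\|\btheta_J^1 - \btheta_J^0\|) = O(\epsilon)$ difference in the smooth case, hence a total error of $O(\epsilon^3)$, and note the difference is generically $O(1)$ when a discontinuity is crossed, giving $O(\epsilon^2)$. Your extra bookkeeping (the axis-aligned update path and the implicit third-derivative regularity of $U$ on each smooth piece) is a sensible refinement of the same argument, not a different one.
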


\begin{proof}[of Corollary~\ref{cor:local_error}]
	When there is no discontinuity of $U$ along the line connecting $\btheta_J^0$ and $\btheta_J^1$, the Taylor expansion of $\xi$ as defined in \eqref{eq:function_for_leading_order_term} with respect to $\btheta_J$ implies that
	\begin{equation}
	\xi\left(\btheta_I^{\half}, \btheta_J^1, \p_I^{\half}\right) - \xi\left(\btheta_I^{\half}, \btheta_J^0, \p_I^{\half}\right) 
		= O(\| \btheta_J^1 - \btheta_J^0\|)
		= O(\epsilon).
	\end{equation}
	Hence the leading order term of \eqref{eq:hamiltonian_error} becomes $O(\epsilon^3)$. 
\end{proof}

\begin{proof}[of Theorem~\ref{thm:local_error}]
	The update $(\btheta_J^{0}, \p_J^{0}) \to (\btheta_J^{1}, \p_J^{1})$ is energy-preserving by the property of the coordinate-wise integrator, so we have
	\begin{equation}
	\label{eq:hamiltonian_error_1}
	\begin{aligned}
	& H(\btheta^1, \p^1) - H(\btheta^0, \p^0) \\
	&\qquad = H(\btheta^1, \p^1) - H(\btheta_I^{\half}, \btheta_J^{1}, \p_I^{\half}, \p_J^{1}) + H(\btheta_I^{\half}, \btheta_J^{0}, \p_I^{\half}, \p_J^{0})  - H(\btheta^0, \p^0). 
	\end{aligned}
	\end{equation}
	Now let $(\btheta_I^{0}(t), \p_I^{0}(t))$ denote the solution of the differential equation
	\begin{equation}
	\label{eq:hamiltons_for_cont_param_1}
	\frac{\diff \btheta_I}{\diff t} 
	= \nabla_{\p_I} K(\p_I, \p_J^{0}), \
	\frac{\diff \p_I}{\diff t} 
	= - \nabla_{\btheta_I} U(\btheta_I, \btheta_J^{0})
	\end{equation}
	with the initial condition $\{\btheta_I^{0}(0), \p_I^{0}(0)\} = (\btheta_I^{0}, \p_I^{0})$. Similarly, let $\{\btheta_I^{\half}(t), \p_I^{\half}(t)\}$ denote the solution of the differential equation
	\begin{equation}
	\label{eq:hamiltons_for_cont_param_2}
	\frac{\diff \btheta_I}{\diff t} 
	= \nabla_{\p_I} K(\p_I, \p_J^{1}), \
	\frac{\diff \p_I}{\diff t} 
	= - \nabla_{\btheta_I} U(\btheta_I, \btheta_J^{1})
	\end{equation}
	with the initial condition $\{\btheta_I^{\half}(0), \p_I^{\half}(0)\} = (\btheta_I^{\half}, \p_I^{\half})$. By the energy-preserving property of exact Hamiltonian dynamics, \eqref{eq:hamiltonian_error_1} becomes
	\begin{equation}
	\label{eq:hamiltonian_error_2}
	\begin{aligned}
	& H(\btheta^1, \p^1) - H(\btheta^0, \p^0) \\
	&\qquad = H(\btheta^1, \p^1) - H\{\btheta_I^{\half}(\epsilon / 2), \btheta_J^{1}, \p_I^{\half}(\epsilon / 2), \p_J^{1}\} \\
	&\hspace{4em}+ H(\btheta_I^{\half}, \btheta_J^{0}, \p_I^{\half}, \p_J^{0})  - H\{\btheta_I^{0}(\epsilon / 2), \btheta_J^{0}, \p^{0}(\epsilon / 2), \p_J^{0}\}.
	\end{aligned}
	\end{equation}
	In essence, \eqref{eq:hamiltonian_error_2} shows that the error in Hamiltonian comes only from the numerical approximation errors in solving the differential equations \eqref{eq:hamiltons_for_cont_param_1} and \eqref{eq:hamiltons_for_cont_param_2}. Lemma~\ref{lem:symplectic_euler_errors} below quantifies such errors and its results can be related to the error in Hamiltonian by observing that
	\begin{equation}
	\begin{aligned}
	& H(\btheta_I^{\half}, \btheta_J^0, \p_I^{\half}, \p_J^0) - H\{\btheta_I^0(\epsilon / 2), \btheta_J^0, \p_I^0(\epsilon / 2), \p_J^0\} \\
	&\qquad = U(\btheta_I^{\half}, \btheta_J^0) - U\{\btheta_I^0(\epsilon / 2), \btheta_J^0\} + K(\p_I^{\half}, \p_J^0) - K\{\p_I^0(\epsilon / 2), \p_J^0\} \\
	&\qquad = \nabla_{\btheta_I}^\intercal U(\btheta_I^{\half}, \btheta_J^0) \big\{ \btheta_I^{\half} - \btheta_I^0(\epsilon / 2) \big\} + \nabla_{\p_I}^\intercal K(\p_I^{\half}, \p_J^0) \big\{ \p_I^{\half} - \p_I^0(\epsilon / 2) \big\} \\
	&\hspace{5em}
	+ O\big\{\| \btheta_I^{\half} - \btheta_I(\epsilon / 2) \|^2\big\}
	+ O\big\{\| \p_I^{\half} - \p_I(\epsilon / 2)  \|^2 \big\}. \\
	\end{aligned}
	\end{equation}
	Now applying \eqref{eq:symplectic_euler_error_estimate_1} of Lemma~\ref{lem:symplectic_euler_errors} with $\tileps = \epsilon / 2$, $(\btheta_I, \p_I) = (\btheta_I^{0}, \p_I^{0})$, and $(\btheta_I^*, \p_I^*) = (\btheta_I^{\half}, \p_I^{\half})$, we obtain
	\begin{equation}
	\begin{aligned}
	& H(\btheta_I^{\half}, \btheta_J^0, \p_I^{\half}, \p_J^0) - H\{\btheta_I^0(\epsilon / 2), \btheta_J^0, \p_I^0(\epsilon / 2), \p_J^0\} \\
	&\qquad = - \frac{\epsilon^2}{8} \nabla_{\btheta_I}^\intercal U(\btheta_I^{\half}, \btheta_J^0)\,  \hessi_K(\p_I^{\half}, \p_J^0) \nabla_{\btheta_I} U(\btheta_I^{\half}, \btheta_J^0) \\
	&\hspace{5em} + \frac{\epsilon^2}{8} \nabla_{\p_I}^\intercal K(\p_I^{\half}, \p_J^0) \hessi_U(\btheta_I^{\half}, \btheta_J^0) \nabla_{\p_I} K(\p_I^{\half}, \p_J^0) + O(\epsilon^3).
	\end{aligned}
	\end{equation}
	In a similar manner, it follows from \eqref{eq:symplectic_euler_error_estimate_2} of Lemma~\ref{lem:symplectic_euler_errors} that
	\begin{equation}
	\begin{aligned}
	& H(\btheta_I^{1}, \btheta_J^{1}, \p_I^{1}, \p_J^{1}) - H\{\btheta_I^{\half}(\epsilon / 2), \btheta_J^{1}, \p_I^{\half}(\epsilon / 2), \p_J^{1}\}f \\
	&\qquad = \frac{\epsilon^2}{8} \nabla_{\btheta_I}^\intercal U(\btheta_I^{1}, \btheta_J^{1})\,  \hessi_K(\p_I^{\half}, \p_J^{1}) \nabla_{\btheta_I} U(\btheta_I^{\half}, \btheta_J^{1}) \\
	&\hspace{5em} - \frac{\epsilon^2}{8} \nabla_{\p_I}^\intercal K(\p_I^{1}, \p_J^{1}) \hessi_U(\btheta_I^{\half}, \btheta_J^{1}) \nabla_{\p_I} K(\p_I^{\half}, \p_J^{1}) + O(\epsilon^3) \\
	&\qquad = \frac{\epsilon^2}{8} \nabla_{\btheta_I}^\intercal U(\btheta_I^{\half}, \btheta_J^{1})\,  \hessi_K(\p_I^{\half}, \p_J^{1}) \nabla_{\btheta_I} U(\btheta_I^{\half}, \btheta_J^{1}) \\
	&\hspace{5em} - \frac{\epsilon^2}{8} \nabla_{\p_I}^\intercal K(\p_I^{\half}, \p_J^{1}) \hessi_U(\btheta_I^{\half}, \btheta_J^{1}) \nabla_{\p_I} K(\p_I^{\half}, \p_J^{1}) + O(\epsilon^3).
	\end{aligned}
	\end{equation}
	The result \eqref{eq:hamiltonian_error} now follows by simply noting that the derivatives of $K$ with respect to $\p_I$ are independent of $\p_J$. 
\end{proof}

\begin{lemma}
\label{lem:symplectic_euler_errors}
	For $(\btheta_J, \p_J)$ fixed, let $\{\btheta_I(t), \p_I(t)\}$ denote the solution of the differential equation
	\begin{equation}
	\frac{\diff \btheta_I}{\diff t} 
	= \nabla_{\p_I} K(\p_I, \p_J), \
	\frac{\diff \p_I}{\diff t} 
	= - \nabla_{\btheta_I} U(\btheta_I, \btheta_J)
	\end{equation}
	with the initial condition $(\btheta_I(0), \p_I(0)) = (\btheta_I, \p_I)$. The approximation error of the numerical scheme 
	\begin{equation}
	\label{eq:symplectic_euler_1}
	\btheta_I^* = \btheta_I + \tileps \nabla_{\p_I} K(\p_I^*, \p_J), \
	\p_I^* = \p_I - \tileps \nabla_{\btheta_I} U(\btheta_I, \btheta_J)
	\end{equation}
	satisfies 
	\begin{equation}
	\label{eq:symplectic_euler_error_estimate_1}
	\begin{aligned}
	\btheta_I^* - \btheta_I(\tileps) 
	= - \frac{\tileps^2}{2} \hessi_K(\p_I^*, \p_J) \nabla_{\btheta_I} U(\btheta_I^*, \btheta_J) + O(\tileps^3) \\
	\p_I^* - \p_I(\tileps) 
	= \frac{\tileps^2}{2} \hessi_U(\btheta_I^*, \btheta_J) \nabla_{\p_I} K(\p_I^*, \p_J) + O(\tileps^3) 
	\end{aligned}
	\end{equation}
	where $\hessi_U = \partial^2 U / \partial \btheta_I^2$ and $\hessi_K = \partial^2 K / \partial \p_I^2$ are the Hessians of $U$ and $K$ with respect to $\btheta_I$ and $\p_I$. Similarly, the approximation error of the numerical scheme
	\begin{equation}
	\btheta_I^* = \btheta_I + \tileps \nabla_{\p_I} K(\p_I, \p_J), \
	\p_I^* = \p_I - \tileps \nabla_{\btheta_I} U(\btheta_I^*, \btheta_J)
	\end{equation}
	satisfies 
	\begin{equation}
	\label{eq:symplectic_euler_error_estimate_2}
	\begin{aligned}
	\btheta_I^* - \btheta_I(\tileps) 
	= \frac{\tileps^2}{2} \hessi_K(\p_I, \p_J) \nabla_{\btheta_I} U(\btheta_I, \btheta_J) + O(\tileps^3) \\
	\p_I^* - \p_I(\tileps) 
	= - \frac{\tileps^2}{2} \hessi_U(\btheta_I, \btheta_J) \nabla_{\p_I} K(\p_I, \p_J) + O(\tileps^3).
	\end{aligned}
	\end{equation}
\end{lemma}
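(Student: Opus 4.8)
The plan is to prove both error estimates by comparing the Taylor expansion of the exact flow with that of the numerical update, carrying everything to order $\tileps^2$. First I would expand the exact solution. Since $(\btheta_I(t), \p_I(t))$ solves $\diff \btheta_I / \diff t = \nabla_{\p_I} K$ and $\diff \p_I / \diff t = -\nabla_{\btheta_I} U$ with $(\btheta_J, \p_J)$ held fixed, differentiating once more along the flow gives $\btheta_I''(0) = -\hessi_K \nabla_{\btheta_I} U$ and $\p_I''(0) = -\hessi_U \nabla_{\p_I} K$, where all quantities are evaluated at the initial point $(\btheta_I, \p_I)$. Hence
\begin{equation}
\btheta_I(\tileps) = \btheta_I + \tileps \nabla_{\p_I} K - \frac{\tileps^2}{2} \hessi_K \nabla_{\btheta_I} U + O(\tileps^3), \quad \p_I(\tileps) = \p_I - \tileps \nabla_{\btheta_I} U - \frac{\tileps^2}{2} \hessi_U \nabla_{\p_I} K + O(\tileps^3).
\end{equation}

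For the first scheme \eqref{eq:symplectic_euler_1}, the momentum update $\p_I^* = \p_I - \tileps \nabla_{\btheta_I} U$ carries no $\tileps^2$ term, so subtracting immediately gives $\p_I^* - \p_I(\tileps) = \frac{\tileps^2}{2} \hessi_U \nabla_{\p_I} K + O(\tileps^3)$. For the position update I would substitute $\p_I^* = \p_I - \tileps \nabla_{\btheta_I} U$ into $\btheta_I^* = \btheta_I + \tileps \nabla_{\p_I} K(\p_I^*, \p_J)$ and Taylor-expand $\nabla_{\p_I} K$ about $\p_I$, which produces $\btheta_I^* = \btheta_I + \tileps \nabla_{\p_I} K - \tileps^2 \hessi_K \nabla_{\btheta_I} U + O(\tileps^3)$. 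Subtracting the exact expansion leaves $\btheta_I^* - \btheta_I(\tileps) = -\frac{\tileps^2}{2} \hessi_K \nabla_{\btheta_I} U + O(\tileps^3)$, matching \eqref{eq:symplectic_euler_error_estimate_1}.

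The only subtlety is that the lemma states the leading terms with the Hessians and gradients evaluated at the starred point rather than at $(\btheta_I, \p_I)$; but since $\btheta_I^* - \btheta_I = O(\tileps)$ and $\p_I^* - \p_I = O(\tileps)$, replacing one evaluation point by the other perturbs the already-$O(\tileps^2)$ leading term only at order $O(\tileps^3)$, so both choices are equivalent within the stated accuracy. The second scheme then follows by the identical argument with the roles of the two half-updates interchanged: now $\btheta_I^*$ is the explicit update and carries no $\tileps^2$ term, while $\p_I^*$ inherits a $\tileps^2$ contribution through $\nabla_{\btheta_I} U(\btheta_I^*, \btheta_J)$, which yields \eqref{eq:symplectic_euler_error_estimate_2}.

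I do not anticipate a serious obstacle here; the entire argument is a second-order Taylor expansion and the work is purely bookkeeping. The one place demanding care is tracking which variable is updated first in each scheme---this determines which of $\btheta_I^*, \p_I^*$ receives the implicit $O(\tileps)$ correction that ultimately halves the naive error term---together with the observation above that evaluation points may be shifted freely at the cost of an $O(\tileps^3)$ error.
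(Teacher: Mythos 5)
Your proof is correct and takes essentially the same approach as the paper's: a second-order Taylor expansion of the exact flow and of the numerical update, subtraction of the two, and a final relocation of the evaluation point to the starred variables at the cost of an $O(\tileps^3)$ perturbation of the already-$O(\tileps^2)$ leading term. Like you, the paper works out only the first scheme in detail and dispatches the second by the symmetric argument.
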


\begin{proof}
	The proofs of \eqref{eq:symplectic_euler_error_estimate_1} and \eqref{eq:symplectic_euler_error_estimate_2} are very similar, so we focus on the derivations of \eqref{eq:symplectic_euler_error_estimate_1}. Taylor expansion of $\btheta_I(t)$ yields
	\begin{equation}
	\label{eq:exact_solution_taylor}
	\begin{aligned}
	\btheta_I(\tileps) - \btheta_I
		&= \tileps \frac{\diff \btheta}{\diff t} + \frac{\tileps^2}{2} \frac{\diff^2 \btheta}{\diff t^2} + O(\tileps^3) \\
		&= \tileps \nabla_{\p_I} K(\p_I, \p_J) - \frac{\tileps^2}{2} \hessi_K(\p_I, \p_J) \nabla_{\btheta_I} U(\btheta_I, \btheta_J) + O(\tileps^3).
	\end{aligned}
	\end{equation}
	On the other hand, Taylor expansion of $\nabla_{\p_I} K(\p_I^*, \p_J)$ in the first variable yields  
	\begin{equation}
	\label{eq:numerical_solution_taylor}
	\begin{aligned}
	\btheta_I^* - \btheta_I
		&= \tileps \nabla_{\p_I} K(\p_I, \p_J) + \tileps \, \hessi_K(\p_I, \p_J) (\p_I^* - \p_I) + \tileps \, O(\| \p_I^* - \p_I \|^2) \\
		&= \tileps \nabla_{\p_I} K(\p_I, \p_J) - \tileps^2 \hessi_K(\p_I, \p_J) \nabla_{\btheta_I} U(\btheta_I, \btheta_J) + O(\tileps^3).
	\end{aligned}
	\end{equation}
	Subtracting \eqref{eq:exact_solution_taylor} from \eqref{eq:numerical_solution_taylor}, we obtain
	\begin{equation}
	\begin{aligned}
	\btheta_I^* - \btheta_I(\tileps)
		&= - \frac{\tileps^2}{2} \hessi_K(\p_I, \p_J) \nabla_{\btheta_I} U(\btheta_I, \btheta_J) + O(\tileps^3) \\
		&= - \frac{\tileps^2}{2} \hessi_K(\p_I^*, \p_J) \nabla_{\btheta_I} U(\btheta_I^*, \btheta_J) + O(\tileps^3),
	\end{aligned}
	\end{equation}
	where the second equality again follows from a Taylor expansion applied to the leading order term. The error estimate for the momentum variable is similar; the Taylor expansion of $\p_I(t)$ gives 
	\begin{equation}
	\label{eq:momentum_exact_solution_taylor}
	\p_I(\tileps) - \p_I
		= - \tileps \nabla_{\btheta_I} U(\btheta_I, \btheta_J) - \frac{\tileps^2}{2} \hessi_U(\btheta_I, \btheta_J) \nabla_{\p_I} K(\p_I, \p_J) + O(\tileps^3).
	\end{equation}
	Subtracting \eqref{eq:momentum_exact_solution_taylor} from \eqref{eq:symplectic_euler_1}, we obtain
	\begin{equation}
	\begin{aligned}
	\p_I^* - \p_I(\tileps) 
	&= \frac{\tileps^2}{2} \hessi_U(\btheta_I, \btheta_J) \nabla_{\p_I} K(\p_I, \p_J) + O(\tileps^3) \\
	&= \frac{\tileps^2}{2} \hessi_U(\btheta_I^*, \btheta_J) \nabla_{\p_I} K(\p_I^*, \p_J) + O(\tileps^3). 
	\end{aligned}
	\hspace*{2em} \mbox{\QEDlogo}
	\end{equation}
\end{proof}

\subsection{Global error in Hamiltonian}
\label{sec:global_error}

Theorem~\ref{thm:global_convergence} below establishes the global error in Hamiltonian to be $O(\epsilon^2)$. 
For its proof, we recall that Algorithm~\ref{alg:dhmc_integrator} is designed under the assumption that the parameter space has a partition $\mathbb{R}^{|I|} \times \mathbb{R}^{|J|} = \cup_k \mathbb{R}^{|I|} \times \Omega_k$ such that $U(\btheta)$ is smooth on $\mathbb{R}^{|I|} \times \Omega_k$ for each $k$. Below, in relating the local error to the global one, we make the dependence of a numerical solution on a stepsize $\epsilon$ explicit and denote the value of a numerical solution after $\ell$ steps by $(\btheta_\epsilon^{\ell}, \p_\epsilon^{\ell})$. 

\begin{theorem}
\label{thm:global_convergence}
	Suppose that each $\Omega_k$ is rectangular so that its boundary consists of planes perpendicular to one of the coordinates of $\btheta_J$. Then the global error $H(\btheta_\epsilon^{L}, \p_\epsilon^{L}) - H(\btheta^{0}, \p^{0})$, with $L = L(\epsilon) = \lfloor \tau / \epsilon \rfloor$, incurred by Algorithm~\ref{alg:dhmc_integrator} is of order $O\left( \epsilon^2 D  \right)$ where $D$ is the number of discontinuities in $U$ encountered along the trajectory $\{\btheta(t), 0 \leq t \leq \tau \}$.
\end{theorem}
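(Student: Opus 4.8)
The plan is to obtain the global error by telescoping the one-step changes in the Hamiltonian and bounding each summand with Corollary~\ref{cor:local_error}. The key simplification, special to tracking the error in $H$ rather than in the trajectory, is that the exact flow preserves $H$ and the coordinate-wise update of Algorithm~\ref{alg:coord_integrator_with_laplace_mom} preserves $H$ exactly; hence the change in $H$ across a single step of Algorithm~\ref{alg:dhmc_integrator} is \emph{exactly} the local error quantified in Theorem~\ref{thm:local_error}, evaluated at the current numerical state. Writing $H_\ell = H(\btheta_\epsilon^\ell, \p_\epsilon^\ell)$, I would start from
\begin{equation*}
H(\btheta_\epsilon^L, \p_\epsilon^L) - H(\btheta^0, \p^0)
= \sum_{\ell = 0}^{L - 1} \big( H_{\ell + 1} - H_\ell \big),
\end{equation*}
so that the global error is literally a sum of $L = \lfloor \tau / \epsilon \rfloor$ local errors, with no Gronwall-type amplification required for the leading-order accumulation.

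Next I would classify the $L$ steps following Corollary~\ref{cor:local_error}: call the $\ell$-th step \emph{smooth} if $U$ has no discontinuity on the segment joining $\btheta_{\epsilon, J}^\ell$ and $\btheta_{\epsilon, J}^{\ell + 1}$, and a \emph{crossing} step otherwise. Then each smooth step contributes $O(\epsilon^3)$ and each crossing step contributes $O(\epsilon^2)$. Since there are at most $L = O(\epsilon^{-1})$ smooth steps, they contribute $O(\epsilon^{-1}) \cdot O(\epsilon^3) = O(\epsilon^2)$ in total. Granting that the number of crossing steps is $O(D)$, their total contribution is $O(D) \cdot O(\epsilon^2) = O(D \epsilon^2)$; combining the two and absorbing the smooth $O(\epsilon^2)$ into $O(D \epsilon^2)$ (legitimate once $D \geq 1$, the case $D = 0$ reducing to the standard $O(\epsilon^2)$ HMC bound) gives the claim.

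The substance of the argument, and where the assumption of Theorem~\ref{thm:strong_convergence} enters, is two uniformity statements I would derive from strong convergence of the numerical to the exact solution. First, to count crossing steps: for small $\epsilon$ the numerical trajectory stays uniformly close to the exact one, which by hypothesis meets exactly $D$ discontinuities; the numerical trajectory therefore meets $O(D)$ discontinuities, and once $\epsilon$ is small enough that these are separated across distinct steps, there are $O(D)$ crossing steps. Second, the implicit constants in the $O(\epsilon^3)$ and $O(\epsilon^2)$ local bounds depend on $\nabla U$, $\hessi_U$, $\hessi_K$, and on the jump of $\xi$ (as in \eqref{eq:function_for_leading_order_term}) across a discontinuity, all evaluated along the numerical trajectory; strong convergence confines that trajectory to a neighborhood of the exact one on which these quantities are uniformly bounded, so the per-step estimates hold with a single constant and may be summed.

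I expect the main obstacle to be this second uniformity rather than the counting: one must show that the numerical iterates remain in a region where the gradient and Hessian of the piecewise-smooth potential, and in particular the sizes of the $\xi$-jumps across the finitely many relevant discontinuity facets, are bounded independently of $\ell$ and $\epsilon$. This is exactly the purpose of the embedding assumption underlying Theorem~\ref{thm:strong_convergence}, which ensures the trajectory meets only finitely many, well-separated discontinuity boundaries with controlled jumps, making the term-by-term summation of local errors legitimate.
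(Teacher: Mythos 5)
Your proposal is correct and follows essentially the same route as the paper's own proof: a telescoping sum of local errors, classification of steps into smooth versus crossing via Corollary~\ref{cor:local_error}, and an appeal to Theorem~\ref{thm:strong_convergence} to identify the number of crossing steps with $D$ in the limit $\epsilon \to 0$. If anything, you are slightly more careful than the paper on two points it glosses over --- the uniformity of the implicit constants along the numerical trajectory, and the absorption of the $L \cdot O(\epsilon^3) = O(\epsilon^2)$ smooth-step total when $D \geq 1$.
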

\noindent The assumption stated in Theorem~\ref{thm:global_convergence} is required for our proof of Theorem~\ref{thm:strong_convergence} and is satisfied whenever the discontinuous target $\pi$ is obtained by the embedding of discrete parameters described in Section~\ref{sec:embedding}. We however believe the order of the global error remains unchanged under more general conditions.

\begin{proof} 
	The global error is given as a sum of the local errors:
	\begin{equation}
	\label{eq:global_as_sum_of_local}
	H(\btheta_\epsilon^{L}, \p_\epsilon^{L}) - H(\btheta^{0}, \p^{0})
		= \sum_{\ell = 1}^{L} \left\{ H(\btheta_\epsilon^{\ell}, \p_\epsilon^{\ell})  - H(\btheta_\epsilon^{\ell - 1}, \p_\epsilon^{\ell - 1}) \right\}
	\end{equation}
	Let $D(\epsilon)$ denote the size of the set $\mathcal{D}_\epsilon$ as defined below:
	\begin{equation}
	\begin{aligned}
	\mathcal{D}_\epsilon 
		&= \Big\{ \ell \in \{1, \ldots, L \}: \\
		&\hspace{3em} \text{$\btheta_{\epsilon, J}^{\ell}$ and $\btheta_{\epsilon, J}^{\ell - 1}$ belong to two separate regions of the partition $\Omega_k$'s} \Big\}
	\end{aligned}
	\end{equation}
	By the result of Corollary~\ref{cor:local_error}, we know that the local error is $O(\epsilon^2)$ if $\ell \in \mathcal{D}_\epsilon$ and $O(\epsilon^3)$ otherwise. Therefore, \eqref{eq:global_as_sum_of_local} is a sum of $D(\epsilon) $ terms of $O(\epsilon^2)$ errors and $L(\epsilon) - D(\epsilon)$ terms of $O(\epsilon^3)$ errors, yielding the global error of $O\!\left\{D(\epsilon) \epsilon^2\right\}$. To complete the proof, it follows from Theorem~\ref{thm:strong_convergence} that $D(\epsilon)$ as $\epsilon \to 0$ converges to the number of discontinuities in $U$ encountered along the trajectory $\{\btheta(t), 0 \leq t \leq \tau \}$. 
\end{proof}

\begin{theorem}
\label{thm:strong_convergence}
	Under the assumption of Theorem~\ref{thm:global_convergence}, we have
	\begin{equation}
	\sup_{\ell = 1, \ldots, L(\epsilon)} \big\| \{\btheta(\ell \epsilon), \p(\ell \epsilon)\} - (\btheta_\epsilon^{\ell}, \p_\epsilon^{\ell}) \big\| = O(\epsilon).
	\end{equation}
\end{theorem}

\begin{proof}
	First note that the trajectory of Hamiltonian dynamics corresponding to the kinetic energy \eqref{eq:mixed_kinetic_energy} can be partitioned into $\widetilde{D}$ segments $\{ \btheta(t): t_m < t < t_{m + 1}\}_m$ for $0 = t_0 < t_1 < \ldots < t_{\widetilde{D}} = \tau$ so that on each segment $\diff \btheta_J / \diff t = \bm{m}_J^{- 1} \odot \sign(\p_J)$ is constant. 
	
	The numerical solution approximates the exact solution $\btheta(t) \to \btheta(t + \ell \epsilon)$ up to an error of $O(\epsilon^2)$ for any $\ell$ provided that $\btheta(t)$ and $\btheta(t + \ell \epsilon)$ belong to the same segment $\{ \btheta(t): t_m < t < t_{m + 1}\}$. This is for the following reason. For all sufficiently small $\epsilon$, the coordinate-wise updates of discontinuous parameters  yield the exact solution to 
	\begin{equation}
	\begin{aligned}
	&\frac{\diff \btheta_J}{\diff t}
	= \bm{m}_J^{-1} \odot \textrm{sign}(\p_J), \quad
	\frac{\diff \p_J}{\diff t} = - \nabla_{\btheta_J} U(\btheta), \quad
	\frac{\diff \btheta_{\minus J}}{\diff t} = \frac{\diff \p_{\minus J}}{\diff t} = \bm{0}
	\end{aligned}
	\end{equation}
	provided no sign change in $\p_J$ is encountered. 
	 In this case, Algorithm~\ref{alg:dhmc_integrator} coincides with a symmetric splitting of Hamilton's equation in which the individual components are solved exactly and hence the numerical approximation of $\btheta(t) \to \btheta(t + \epsilon)$ locally agrees with the exact solution up to an error of $O(\epsilon^3)$ \citep{leimkuhler04}. 
	
	Now consider the case when $\btheta(t)$ and $\btheta(t + \epsilon)$ do not belong to the same segment. In this case, the coordinate-wise integrator approximates the change in $\diff \btheta_J / \diff t $ through the momentum flip $p_j \to - p_j$ for an appropriate $j$ with $\theta_j$ held fixed. This may or may not be caused by a discontinuity in $U$ along the path $\{\btheta(s): t < s < t + \epsilon\}$. When there is no discontinuity, the approximation is always accurate up to an error of $O(\epsilon)$. When there is a discontinuity, our assumption on the boundaries of $\Omega_k$'s guarantees the numerical approximation error to be $O(\epsilon)$.
	
	To summarize, we have shown that the total accumulated error is $O(\epsilon^2)$ while the solution stays within the same segment $\{ \btheta(t): t_m < t < t_{m + 1}\}_m$ and then an additional error of $O(\epsilon)$ is incurred when crossing from one segment to another. Since the solution trajectory consists of $\widetilde{D}$ such segments, the total accumulated error is $O\{\widetilde{D} (\epsilon + \epsilon^2)\} = O(\widetilde{D} \epsilon)$. 
\end{proof}

\end{document}